\documentclass[a4paper,USenglish]{lipics-v2019}
\usepackage[utf8]{inputenc}

\newcommand{\Z}{\mathbb{Z}}
\newcommand{\R}{\mathbb{R}}
\DeclareMathOperator{\m}{m}
\DeclareMathOperator{\comp}{c}
\DeclareMathOperator{\conv}{conv}

\bibliographystyle{plainurl}

\title{Efficient Algorithms for Battleship}

\author{Loïc Crombez}{Université Clermont Auvergne, LIMOS, France \and \url{https://fc.isima.fr/~lcrombez/} }{loic.crombez@uca.fr}{https://orcid.org/0000-0002-9542-5276}{This work has been sponsored by the French government research program ``Investissements d'Avenir'' through the IDEX-ISITE initiative 16-IDEX-0001 (CAP 20-25).
}

\author{Guilherme D. da Fonseca}{Université Aix Marseille, LIS, France \and \url{https://pageperso.lis-lab.fr/guilherme.fonseca/}}{guilherme.fonseca@lis-lab.fr}{https://orcid.org/0000-0002-9807-028X}{This work is supported by the French ANR PRC grant ADDS (ANR-19-CE48-0005) and the Brazilian CAPES-PrInt project number 88881.310248/2018-01.}

\author{Yan Gerard}{Université Clermont Auvergne, LIMOS, France \and   \url{https://yangerard.wordpress.com/} }{yan.gerard@uca.fr}{https://orcid.org/0000-0002-2664-0650}
{This work is supported by the French ANR PRC grant ADDS (ANR-19-CE48-0005).}

\authorrunning{L. Crombez, G. da Fonseca, and Y. Gerard} 

\Copyright{Loïc Crombez, Guilherme da Fonseca, and Yan Gerard} 

\ccsdesc[100]{Theory of computation $\rightarrow$
    Computational geometry} 

\keywords{Polyomino, digital geometry, decision tree, lattice,  HV-convexity, convexity} 

\category{} 

\relatedversion{} 

\supplement{}



\nolinenumbers 

\hideLIPIcs  

\EventEditors{Martin Farach-Colton, Giuseppe Prencipe, and Ryuhei Uehara}
\EventNoEds{3}
\EventLongTitle{10th International Conference on Fun with Algorithms (FUN 2020)}
\EventShortTitle{FUN 2020}
\EventAcronym{FUN}
\EventYear{2020}
\EventDate{September 28--30, 2020}
\EventLocation{Favignana Island, Sicily, Italy}
\EventLogo{}
\SeriesVolume{157}
\ArticleNo{11}

\begin{document}

\maketitle

\begin{abstract}
We consider an algorithmic problem inspired by the Battleship game. In the variant of the problem that we investigate, there is a unique ship of shape $S \subset \Z^2$ which has been translated in the lattice $\Z^2$. 
We assume that a player has already hit the ship with a first shot and the goal is to sink the ship using as few shots as possible, that is, by minimizing the number of missed shots.
While the player knows the shape $S$, which position of $S$ has been hit is not known.

Given a shape $S$ of $n$ lattice points, the minimum number of misses that can be achieved in the worst case by any algorithm is called the Battleship complexity of the shape $S$ and denoted $c(S)$.
We prove three bounds on $c(S)$, each considering a different class of shapes. First, we have $c(S) \leq n-1$ for arbitrary shapes and the bound is tight for parallelogram-free shapes. Second, we provide an algorithm that shows that $c(S) = O(\log n)$ if $S$ is an HV-convex polyomino. Third, we provide an algorithm that shows that $c(S) = O(\log \log n)$ if $S$ is a digital convex set. 
This last result is obtained through a novel discrete version of the Blaschke-Lebesgue inequality relating the area and the width of any convex body.
\end{abstract}

\section{Introduction}

We consider a geometric problem inspired by the children's game Battleship. The Wikipedia description of the game is: \begin{quote}
    Battleship (also Battleships or Sea Battle) is a strategy type guessing game for two players. It is played on ruled grids (paper or board) on which each player's fleet of ships (including battleships) are marked. The locations of the fleets are concealed from the other player. Players alternate turns calling ``shots'' at the other player's ships, and the objective of the game is to destroy the opposing player's fleet.
\end{quote} 

\begin{figure}[t]
  \begin{center}
		\includegraphics[height=4cm]{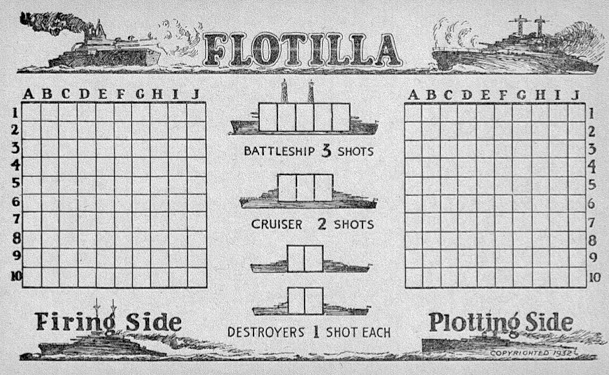}
        \hspace{.5cm}
		\includegraphics[height=4cm]{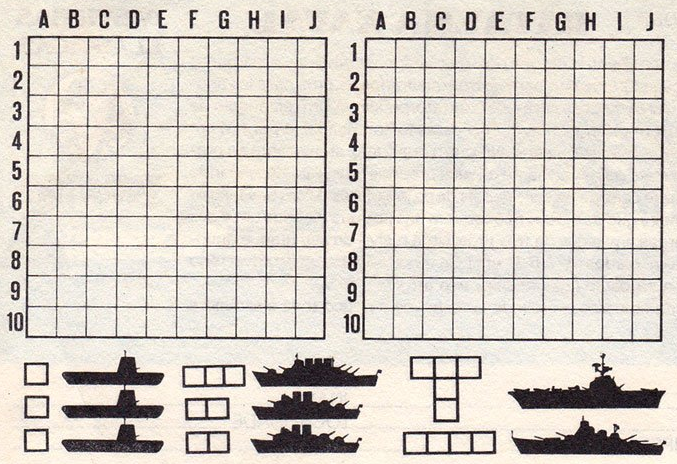}
	\end{center}
	\caption{\label{f:boards}American and Portuguese paper game boards of Battleship.}
\end{figure}

After each shot, the player is informed if the shot has been a ``hit'' or a ``miss'', but no other information is given. In the original version the shapes of the ships are line segments of different lengths. However different countries and commercial brands use a variety of shapes for the ships (see Figure~\ref{f:boards}).

During the game, the strategy of a player (that we call Alice), is usually decomposed in an alternate sequence of two steps:
\begin{enumerate}
\item Hit a new ship of the opponent (that we call Bob).
\item Sink that ship with a minimal number of misses and go back to the first step.
\end{enumerate}

The first step is a hitting set problem and many interesting variations are possible. In this paper, however, we consider the second step.
The goal of the second step is to sink the ship (which has already been hit once) with a minimal number of misses. During a real game, the position of the new ship can be constrained by the positions of the other ships which have been already discovered and the grid boundaries, but we consider a simpler case. The fleet is composed of only one ship placed on an infinite grid using only integer translations. In other words, its shape $S$ is given and can only be translated in the lattice. We know, however the coordinates of one grid cell of the ship.

The second modification to the rules that we make is to forbid rotations. This modification simplifies the problem and since there are at most $4$ possible rotations, the original problem can be solved by considering each rotation separately (at the expense of a factor of $4$).

As a toy example to motivate the problem, we consider the case in which the shape of the ship is a horizontal line segment of length $4$. Alice has already hit Bob's ship, which she knows is a horizontal line segment of length $4$. However, Alice is clueless about which square of the ship she has hit.
In this case, Alice may progressively shoot to the right of the first hit until she misses a shot. At this point, she knows the precise location of Bob's ship and may finish sinking it without missing any additional shot (if she has not already sunk the ship at the fourth shot). In this case, Alice has a strategy that requires at most $1$ missed shot.

We refer to the minimum number of misses that Alice needs to sink a ship of shape $S \subset \Z^2$ as $\comp(S)$. Notice that Alice knows the shape of $S$, but not which square she has initially hit. We just showed that $\comp(S) \leq 1$ if $S$ is a horizontal (or vertical) line segment.
But what happens if the shape $S$ of the ship is not a line segment? 

The goal of this paper is to provide bounds to $\comp(S)$ depending on properties of the shape $S$.
We prove the following results for a shape $S$ of $n$ points:
\begin{itemize}
    \item for arbitrary shapes, $\comp(S) \leq n-1$,
    \item for parallelogram-free shapes, $\comp(S) = n-1$,
    \item for HV-convex polyominoes, $\comp(S) = O(\log n)$, and
    \item for digital convex shapes, $\comp(S) = O(\log \log n)$.
\end{itemize}

The remainder of the paper is organized as follows. Section~\ref{s:prelim} is devoted to formalize our notation and to prove some simple results. In Section~\ref{s:hv}, we provide an algorithm with $O(\log n)$ misses in the worst case for HV-convex polyominoes. In Section~\ref{s:convex}, we present an algorithm with $O(\log \log n)$ misses in the worst case for digital convex sets.
We conclude the paper with a presentation of several open problems and variations.

\section{Preliminaries} \label{s:prelim}

In this section, we formalize our notation and prove some simple results.
Before going further, let us make the notations precise. The ship's \emph{shape} is the finite lattice set $S \subset \Z^2$ and its number of points is $n$. The opponent translated the shape by an unknown vector $-p \in \Z^2$ obtaining a \emph{ship} $S-p$. The vector $p \in \Z^2$ is the \emph{position} of the ship. We say that a \emph{shot} $x$ is a \emph{hit} if $x \in S - p$ and a \emph{miss} otherwise.
By assuming (without loss of generality) that the first hit happens at the origin $x=(0,0)$, we know then that the position $p$ of the ship is a point in the shape $S$ (that is, $p \in S$) but we do not know which point.
In order to determine the actual value of $p$, we are allowed to test the membership in $S$ of points of the form $p+x$ and our goal is to determine $p$ using as few failed membership tests (called \emph{misses}) as possible.

Given a shape $S$, we can model an algorithm to determine the position $p$ by a binary decision tree $T$.
The children of each node correspond to the possible outcomes of the shot: hit or miss. The leaves of the decision tree represent the nodes in which the position $p$ of the ship has been determined (they are not necessarily obtained after a hit). Since each leaf corresponds to a different position $p \in S$ of the ship, it follows that there are exactly $n$ leaves.

The efficiency of the algorithm depends on the number of misses in the path going from the root to a leaf corresponding to position $p$. This number of misses for a position $p$ using tree $T$ is denoted $\m_T(p)$. 
We omit the subscript $T$ in $\m_T(S)$ when the decision tree $T$ is clear from the context.

The \emph{complexity of the algorithm} $T$ is the maximum number of misses in a path going from the root to a leaf, that is $\max_{p \in S}m_T(p)$. Note that the complexity is generally not equal to the height of the tree. 

Given a shape $S$, we define the Battleship \emph{complexity} $\comp(S)$ as the worst-case complexity considering all the decision trees $T$ that determine the position a ship of shape $S$:
\[\comp(S)=\min_T \, \max_{p \in S} \, \m_T(p).\] 

Next, we reuse the simple example of a horizontal line segment of length $4$, to illustrate our notation and framework. 
In this case, the shape $S$ is the set of lattice points \[S=\{ (0,0), (1,0), (2,0), (3,0) \}.\] 
An optimal algorithm for this shape has already been presented in the Introduction: after the initial shot at $x=(0,0)$, we shoot at values $x=(1,0),(2,0),\ldots$ until a miss occurs. This algorithm is modeled by the decision tree represented in Figure~\ref{f:simpletree}. The number of misses $\m(x)$ of this algorithm is equal to $0$ if $p = (0,0)$ and $1$ otherwise, giving a maximum of $1$, which proves $\comp(S) \leq 1$. It is easy to see that for any shape $S$ with $|S| > 1$, $\comp(S) \geq 1$. Hence, the algorithm is optimal.

\begin{figure}
  \begin{center}
		\includegraphics[scale=.75]{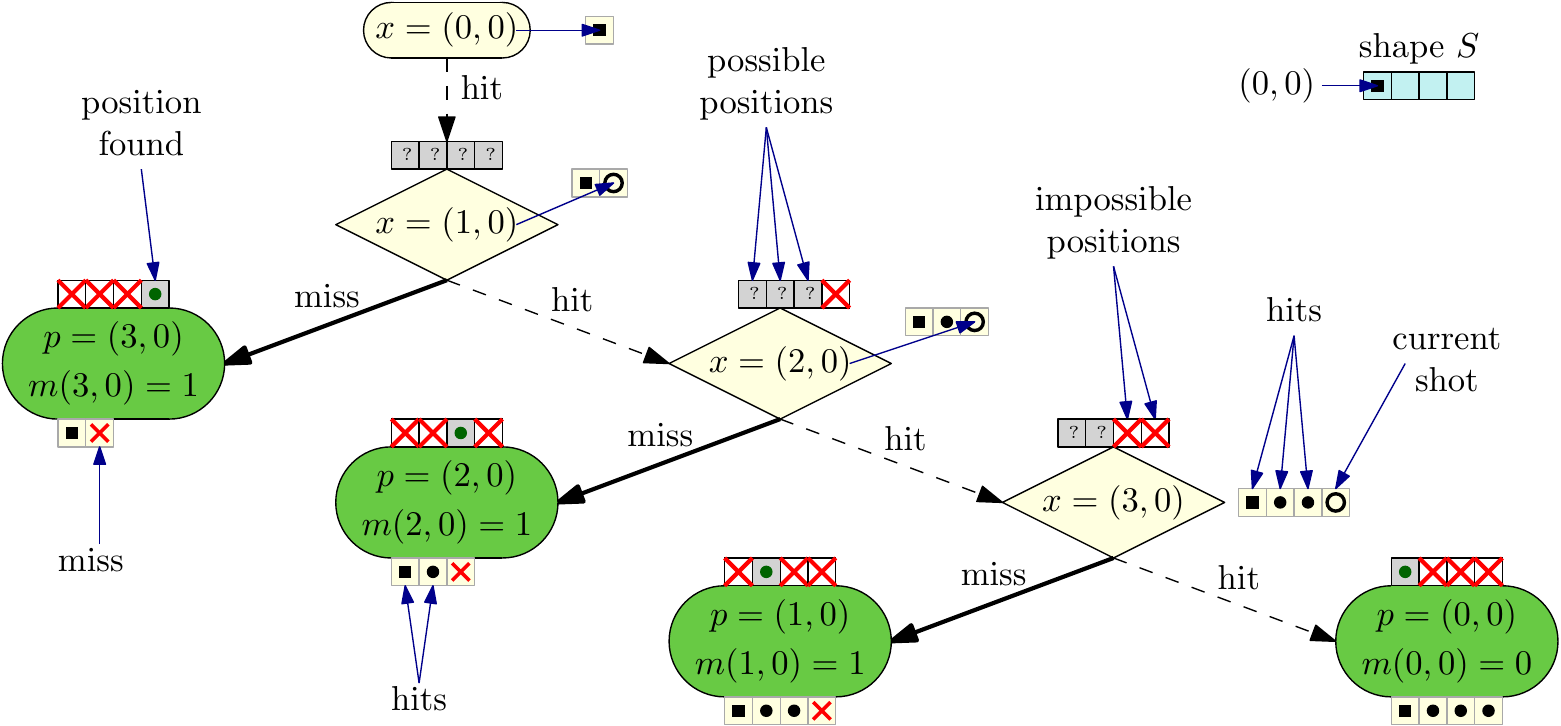}
	\end{center}
	\caption{\label{f:simpletree}Decision tree modeling an algorithm to sink the ship of horizontal shape $S=\{ (0,0), (1,0), (2,0), (3,0) \}$. The nodes correspond to the results of each. At each node, the set $P$ of the possible positions is represented by the gray squares in the small grid. 
	The leaves are the nodes where the ship position has been determined. The worst-case number of misses is $1$.
}
\end{figure}

A decision tree for a more complex shape is presented in Figure~\ref{f:simpletree2}. At each node of the tree, let $P$ be the corresponding set of possible positions. The set $P$ is represented by gray squares in the figure. The inclusion $p \in S$ is the only information that we have about the position of the ship when we start the algorithm (at the root of the decision tree). Hence, the set of possible positions at the root is $P=S$ and $P$ gets smaller at each new shot until it is reduced to a singleton at the leaves of the tree. 
At each new shot, $P$ is reduced in the following way (we use $S-x$ to denote a translation of the set $S$ by vector $x$):
\begin{itemize}
    \item If $x$ is a hit, then the set of possible positions for the child becomes $P \gets P \cap (S - x)$.
    \item If $x$ is a miss, then the set of possible positions for the child becomes $P \gets  P \setminus (S - x)$.
\end{itemize}

\begin{figure}
  \begin{center}
        \includegraphics[scale=.75]{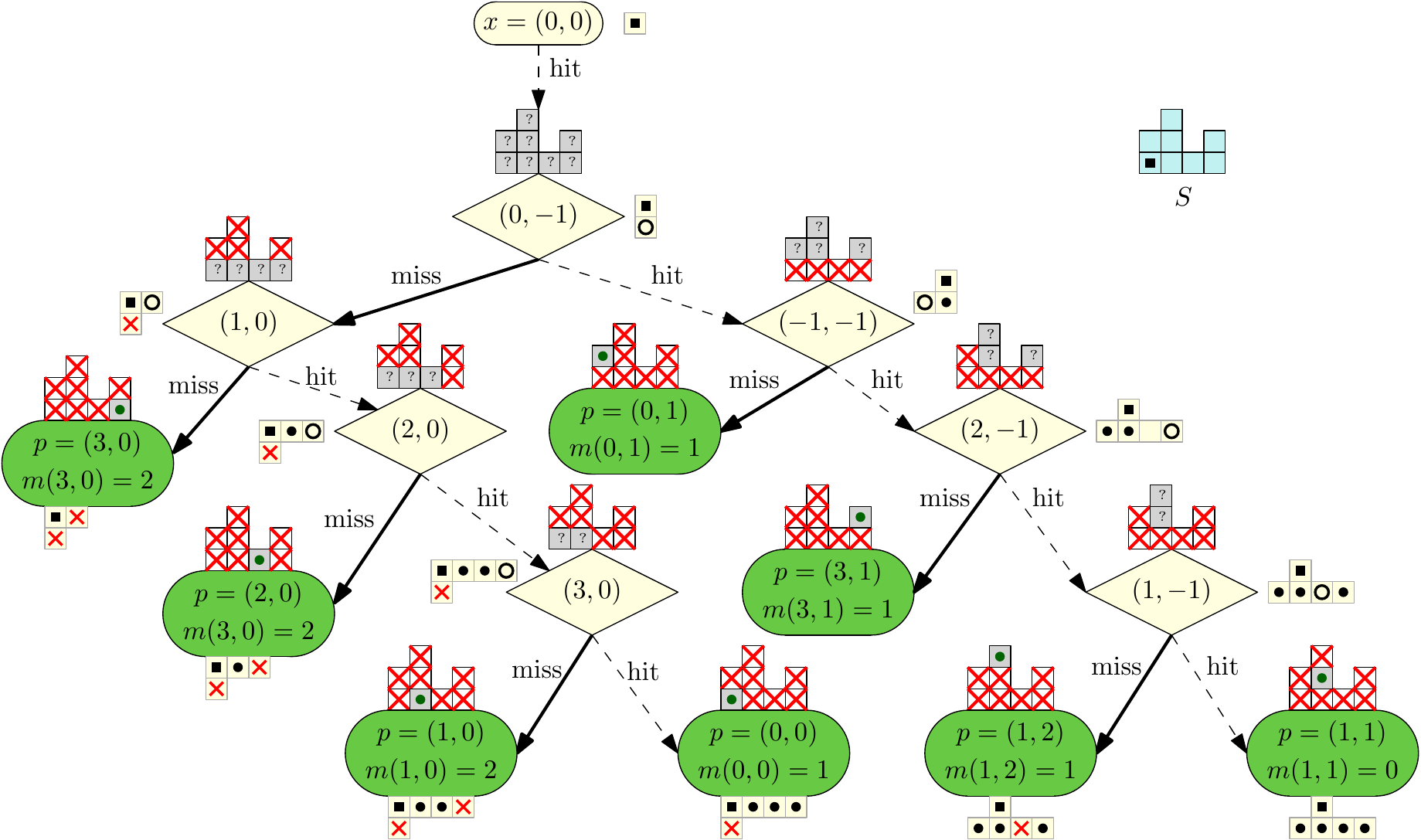}
	\end{center}
	\caption{\label{f:simpletree2}\textbf{A shape $S\subset \Z ^2$ and an algorithm to sink the ship having this shape with at most $2$ misses}. We follow the same graphic code as in Figure~\ref{f:simpletree}. This algorithm $T$ has a worst case complexity $\m(T)=2$. 
	The worst-case complexity of this shape $S$ is exactly $\m(S)=2$.
}
\end{figure}

It is easy to see that whatever the set of possible positions $P$ is, there always exists a shot which allows us to split $P$ in two non-empty subsets $P \setminus (S - x)$ and $ P \cap (S - x)$. Hence, the number of elements in $P$ strictly decreases as we move from a parent to a child and $\m(S) \leq n-1$ for all shapes $S$.

\subsection{Connection with Classification Trees} \label{ss:classification}

Classification trees are decision trees involved for instance in data mining for identifying an element $p$ belonging to a discrete set called the \textit{source set} and denoted $S$~\cite{Rok14}. The element $p$ is identified through the outcomes of a sequence of tests, where the choice of the new test depends on the previous outcomes. This dependency is modeled by a tree whose root represents the initial test. More generally, any internal node is associated to a test $T$ while its children correspond to the possible outcomes of $T$. 
Given an unknown element $p$, the algorithm to identify $p$ starts from the root. At each node, it considers the associated test and goes to the children node corresponding to the outcome of the test. The algorithm stops when arriving at a leaf: the leaf provides the identity of the unknown element $p$. 

The number of tests required to identify $p$ is the level of the corresponding leaf. Then the design of decision trees of small height is a well studied problem. This problem is known to be NP-hard in general~\cite{Hya76} (for minimizing the expected level of the leaves). The algorithm that chooses the most balanced test at each node provides an $O(\log n)$-approximation algorithm and the problem admits no polynomial $o(\log(n))$-approximation algorithms~\cite{Adl08}. 

Decision trees have been used in computational geometry for different purposes, for instance determining geometric models~\cite{Ark93a} or concept classes~\cite{Ark93b} in an image or more recently for the $k$-sum problem~\cite{Kan19}. As far as we know, the question of designing efficient strategies for playing Battleship with different types of shapes has not been addressed.

However, our problem possesses a fundamental difference in comparison to the classical use of classification trees: the Battleship problem is not symmetric. The goal in Battleship is to sink the ship with the minimal number of shots, but, since the number of hits to sink a ship is always equal to $n$, the goal becomes to minimize the number of misses needed to locate the position of the ship. 

The simplest heuristic to design classification trees of small height is to choose at each node the most balanced test possible. In our case, that would mean to choose a test such that the number of elements in $P \cap (S - x)$ and $P \setminus (S - x)$ are as similar as possible. As our goal is to minimize the number of misses instead of the height, we believe that a good heuristic strategy is to choose a shot such the number $|P \setminus (S - x))|$ of possible positions in case of a miss is as small as possible. However, we have not been able to prove any good worst-case bounds for this heuristic.

\subsection{Parallelogram-Free Shapes}

We say that a set $S$ is \emph{parallelogram-free} if every pair of distinct points define a unique difference vector (see Figure~\ref{f:par_free} for an example).
In other words, $S$ does not contain two distinct pairs of distinct points $s_1 \ne s_2$, $s_3 \ne s_4$ such that $s_2-s_1 = s_4-s_3$. 
In this section, we show that if $S$ is parallelogram-free, then $\comp(S) = n-1$. 
\begin{figure}[ht!]
  \begin{center}
		\includegraphics[scale=.75]{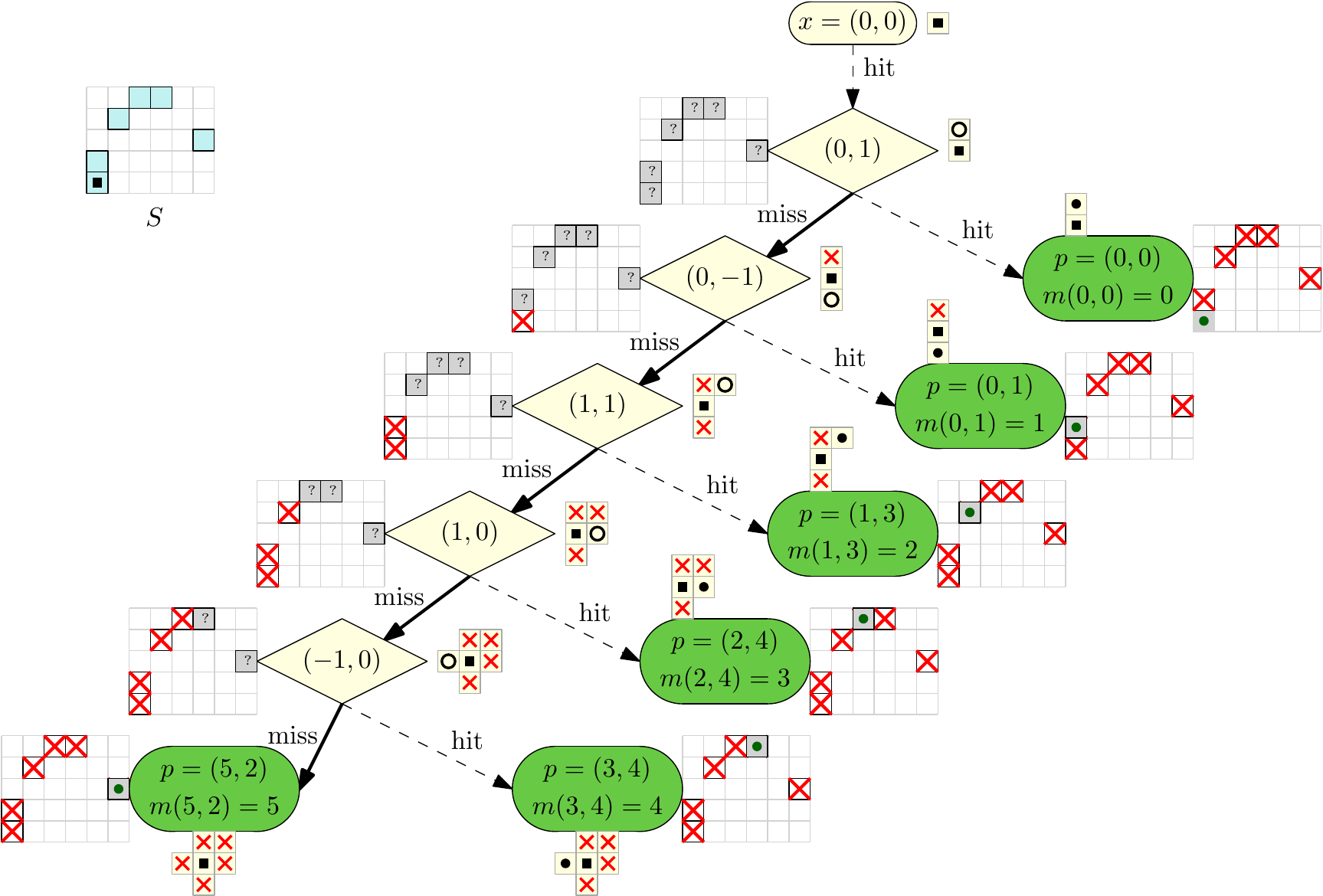}
	\end{center}
	\caption{\label{f:par_free}\textbf{A parallelogram-free shape $S\subset \Z ^2$}. In this case, all algorithms have isomorphic decision trees.}
\end{figure}

\begin{theorem}
If $S$ is a parallelogram-free polyomino of $n$ points, then $\comp(S) = n-1$.
\end{theorem}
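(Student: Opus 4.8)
The plan is to establish the lower bound $\comp(S) \ge n-1$, since the matching upper bound $\comp(S) \le n-1$ already holds for arbitrary shapes, as shown above. The whole argument rests on a single structural consequence of the parallelogram-free hypothesis, so the first thing I would isolate is the following claim: for a parallelogram-free shape $S$ and any nonzero vector $x \ne (0,0)$, the intersection $S \cap (S - x)$ contains at most one point. To see this, suppose two distinct points $p_1, p_2$ both lie in $S \cap (S-x)$. Then all four points $p_1, p_2, p_1 + x, p_2 + x$ belong to $S$, and the two pairs $(p_1, p_1+x)$ and $(p_2, p_2+x)$ are distinct pairs of distinct points (distinct because $p_1 \ne p_2$, and each consists of distinct points because $x \ne (0,0)$) that share the same difference vector $x$, contradicting that $S$ is parallelogram-free.

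Next I would translate this claim into the decision-tree model. Recall that shooting at $x$ splits the current set $P \subseteq S$ of possible positions into the hit set $P \cap (S-x)$ and the miss set $P \setminus (S-x)$. Since $P \subseteq S$, the claim shows that for every $x \ne (0,0)$ the hit set has size at most $1$, whereas for $x = (0,0)$ the shot is a certain hit and the miss set is empty. In other words, no shot can ever send more than one possible position into the hit branch.

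Then, given an arbitrary decision tree $T$, I would trace the root-to-leaf path that follows the miss branch whenever it is non-empty and the hit branch otherwise. Along this path $|P|$ starts at $n$ and ends at $1$, so the total decrease is $n-1$. Every hit step taken on this path has an empty miss set, hence $x = (0,0)$ and $|P|$ is unchanged, contributing no miss; every miss step taken has $x \ne (0,0)$, so by the previous paragraph it discards at most one position into the hit branch and thus decreases $|P|$ by at most one, while contributing exactly one miss. Because the total decrease of $n-1$ is realized entirely by miss steps, each lowering $|P|$ by at most one, the path must contain at least $n-1$ miss steps. Its terminal leaf $p$ therefore satisfies $\m_T(p) \ge n-1$, and since $T$ was arbitrary we conclude $\comp(S) \ge n-1$; combined with the upper bound this yields $\comp(S) = n-1$.

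I expect the only delicate point to be the bookkeeping around \emph{degenerate} shots, namely a shot that is a certain hit or a certain miss for the whole current set $P$. The key is to argue that such a shot either leaves $|P|$ untouched (a certain hit, forcing $x=(0,0)$) or still removes at most one position (any other shot, by the claim), so that no sequence of shots can drive $|P|$ from $n$ down to $1$ using fewer than $n-1$ misses. The geometric content of the theorem is entirely captured by the one-line parallelogram-free claim; everything after it is a counting argument along a single path of the tree.
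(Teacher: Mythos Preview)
Your proposal is correct and follows essentially the same approach as the paper: both hinge on the observation that $|S\cap(S-x)|\le 1$ for any nonzero $x$, so every non-trivial shot sends at most one possible position into the hit branch, forcing the path that always follows the nonempty miss branch to accumulate at least $n-1$ misses. Your write-up is simply a more explicit version of the paper's terse argument, in particular spelling out the adversary path and the degenerate $x=(0,0)$ case that the paper leaves implicit.
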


\begin{proof}
We have already showed that $\comp(S) \leq n-1$ for any shape. Next, we show that $\comp(S) \geq n-1$ for a parallelogram-free shape $S$. To do this, we show that whenever we obtain a hit in the tree, we have successfully determined the position $p$. Hence, the miss branch of the tree contains all remaining points.

At any node of the tree, the new set of positions after a hit by shooting $x \ne (0,0)$ is defined by $P \gets P \cap (S - x)$.
Let us assume to obtain a contradiction that $y$ and $y'$ are distinct points in $S \cap (S-x)$ , then $y = s-x$ and $y' = s'-x$ with again $s$ and $s'$ both in $S$. It follows that  $x=y-s=y'-s'$. As $x$ is not $(0,0)$ and due to the parallelogram-free property,  $y=y'$ which contradicts the assumption.
\end{proof}

\section{HV-Convex Polyominoes} \label{s:hv}

In this Section, we investigate the Battleship complexity for the class of lattice sets of $\Z^2$ which are $4$-connected and HV-convex. 

\subsection{Definition and Properties}

A \emph{$4$-connected path} is a sequence $(x_1,\ldots,x_k)$ of distinct points of $\Z ^2$ such that the Euclidean distance between $x_{i}$ and $x_{i+1}$ is equal to $1$ for $i = 1,\ldots,k-1$.
A lattice set $S\subset \Z^2$ is \emph{$4$-connected} if for any pair of points $x_1,x_k \in S$, there exists at least one $4$-connected path $(x_1,\ldots,x_k)$ in $S$. The $4$-connected finite lattice sets are called \emph{polyominoes} (Figure~\ref{f:hv_poly}).

HV-convexity is a notion of directional convexity. A lattice set  $S\subset \Z^2$ is horizontally (vertically) convex if the intersection of $S$ with any row (column) is a set of consecutive points.
A lattice set which is horizontally and vertically convex is said to be \emph{HV-convex} (Figure~\ref{f:hv_poly}).

\begin{figure}[ht!]
  \begin{center}
		\includegraphics[scale=.75]{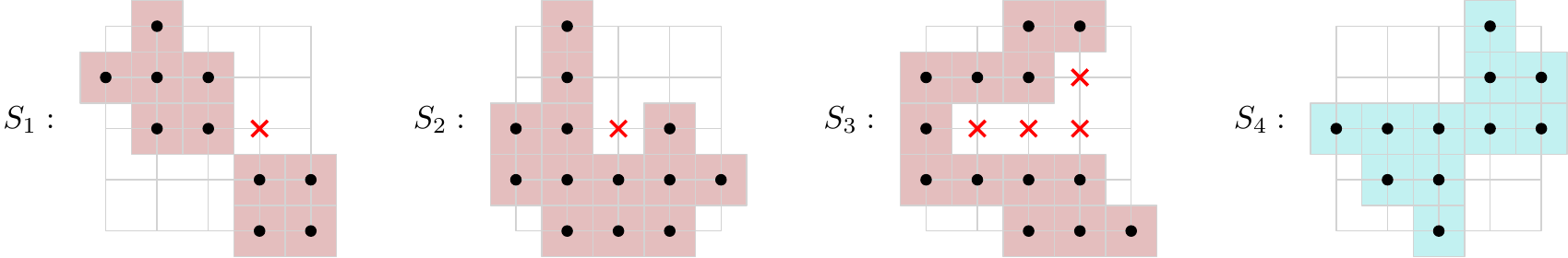}
	\end{center}
	\caption{\label{f:hv_poly}\textbf{Definition of HV-convex polyominoes}. The set $S_1$ is not $4$-connected, hence not a polyomino. The set $S_2$ a polyomino but not horizontally convex. The set $S_3$ is a polyomino but not vertically convex. The set $S_4$ is an HV-convex polyomino since it is $4$-connected, horizontally and vertically convex.
}
\end{figure}

We state two properties used in the following for proving the $O(\log n)$ bound on the complexity of HV-convex polyominoes.

\begin{lemma}\label{l:hv_stair}
Let $S$ be an HV-convex polyomino. If $(x,y)$ and $(x',y')$ are two different points of $S$ with $x \leq x'$ and $y \leq y'$, then either $(x+1,y)$ or $(x,y+1)$ is in $S$. 
\end{lemma}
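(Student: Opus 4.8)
The plan is to argue by contradiction, assuming that neither $(x+1,y)$ nor $(x,y+1)$ lies in $S$, and to exploit both the HV-convexity and the $4$-connectivity of $S$. First I would dispose of the two degenerate configurations. If $x=x'$, then $y<y'$ since the points are distinct, and vertical convexity applied to column $x$ (which contains both $(x,y)$ and $(x,y')$) forces the intermediate point $(x,y+1)\in S$; symmetrically, if $y=y'$ then horizontal convexity of row $y$ gives $(x+1,y)\in S$. In either case the conclusion holds outright, so it remains to treat the genuinely northeast case $x<x'$ and $y<y'$.

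In that case the two standing assumptions interact cleanly with directional convexity. Since $(x,y)\in S$ while $(x+1,y)\notin S$, horizontal convexity of row $y$ shows that the occupied cells of that row form an interval ending at $x$; hence every point of $S$ in row $y$ has first coordinate at most $x$, i.e.\ $x$ is the rightmost occupied column of row $y$. Symmetrically, since $(x,y+1)\notin S$, vertical convexity of column $x$ shows that $y$ is the topmost occupied row of column $x$: every point of $S$ in column $x$ has second coordinate at most $y$. These two facts are the only consequences of the convexity hypotheses that I will use.

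The heart of the proof is then a connectivity argument. Using $4$-connectivity I would fix a $4$-connected path in $S$ from $(x,y)$ to $(x',y')$ and consider the open quadrant $Q=\{(u,v): u>x \text{ and } v>y\}$. The endpoint $(x',y')$ lies in $Q$ whereas the starting point $(x,y)$ does not, so the path contains a step that \emph{enters} $Q$, passing from a point outside $Q$ to a point inside. Because a single $4$-step changes exactly one coordinate by one, such an entering step must be either a horizontal step $(x,v)\to(x+1,v)$ with $v\ge y+1$, or a vertical step $(u,y)\to(u,y+1)$ with $u\ge x+1$. In the first case the point $(x,v)\in S$ lies in column $x$ strictly above row $y$, contradicting that $y$ is the topmost occupied row of column $x$; in the second case the point $(u,y)\in S$ lies in row $y$ strictly to the right of column $x$, contradicting that $x$ is the rightmost occupied column of row $y$. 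Either way we reach a contradiction, completing the argument.

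The step I expect to be the main obstacle is precisely this last one, and the reason is that $(x',y')$ may be arbitrarily far from $(x,y)$. A naive attempt that only inspects the first move of the path away from $(x,y)$, or that merely compares the two adjacent rows $y$ and $y+1$, does not suffice, since the path is free to wander down and to the left before eventually climbing to the northeast. Phrasing the argument through the single moment at which the path \emph{first} enters the open quadrant $Q$ is what makes it global while still reducing the conclusion to the two purely local convexity facts established above.
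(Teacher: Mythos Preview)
Your proof is correct and follows essentially the same approach as the paper: both first dispose of the aligned cases by directional convexity, then in the strict northeast case use HV-convexity to conclude that the horizontal ray to the right of $(x,y)$ and the vertical ray above $(x,y)$ contain no points of $S$, and finally derive a contradiction with $4$-connectivity. The only difference is presentational: the paper simply asserts that these two empty rays preclude any $4$-connected path from $(x,y)$ to $(x',y')$, whereas you make this explicit via the first-entry-into-the-open-quadrant argument, which is a clean way to justify that one-line claim.
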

\begin{proof}
If $x=x'$ or $y=y'$, the result is a direct consequence of the horizontal vertical convexities.
We consider now the case $x<x'$ and $y<y'$.
Let us assume that neither $(x+1,y)$ nor $(x,y+1)$ are in $S$, to obtain a contradiction. It follows that the vertical ray above $(x,y)$ and the horizontal ray to the right of $(x,y)$ do not contain any point of $S$. Then there is no $4$-connected path to connect  $(x,y)$ and $(x',y')$.
\end{proof}

After this general lemma about HV-convex polyominoes, let us introduce more specific material for our purpose, where we consider only the rows of fixed length $\ell$ (Figure~\ref{f:mono}): given an HV-convex polyomino $S$ and a fixed length $\ell\in \Z^+$, let $L$ be the number of rows of length $\ell$.
For $i$ from $1$ to $L$, we denote by $r_i$ the right endpoints of the $i$-th row of length $\ell$ ordered by $y$ coordinate. It follows that for all $i$, we have (i) $r_i \in S$, (ii) $r_i - (\ell,0) \not \in S$, (iii) $r_i - (\ell-1,0) \in S$, and (iv) $r_i + (1,0) \not \in S$.

\begin{lemma}\label{l:hv_monotonic}
 Given an HV-convex polyomino $S$ and a fixed length $\ell \in \Z$. The $x$-coordinate $x_i$ of the right endpoints $r_i$ of the rows of length $\ell$ forms a monotonic sequence (either  $x_i \leq x _{i+1}$ for all $i\in \{1,\ldots,L-1\}$ or $x_i \geq x _{i+1}$ for all $i \in \{1,\ldots,L-1\}$, as in Figure~\ref{f:mono}(b)).
\end{lemma}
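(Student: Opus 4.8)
The plan is to first pin down the global shape of the two sides of $S$, and only afterwards restrict attention to the rows of length $\ell$. For a nonempty row at height $y$, write $b_y$ and $a_y$ for the abscissas of its right and left endpoints, so that the row occupies the columns $[a_y,b_y]$ and has length $b_y-a_y+1$. The structural claim I would establish first is that the right boundary $y\mapsto b_y$ is \emph{unimodal}: there is a height $y^{\ast}$ with $b_y$ non-decreasing for rows at height below $y^{\ast}$ and non-increasing for rows above $y^{\ast}$; symmetrically, the left boundary $y\mapsto a_y$ is \emph{valley-unimodal} (non-increasing and then non-decreasing about some height $y_{\ast}$). To prove unimodality of $b$, let $B=\max_{(x,y)\in S}x$ be the rightmost column, attained at a point $(B,y^{\ast})\in S$. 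For any nonempty row $y<y^{\ast}$ I would apply Lemma~\ref{l:hv_stair} to the distinct points $(b_y,y)$ and $(B,y^{\ast})$, which satisfy $b_y\le B$ and $y<y^{\ast}$: either $(b_y+1,y)$ or $(b_y,y+1)$ lies in $S$. The first is impossible, since $(b_y+1,y)$ sits immediately to the right of the right endpoint of its own row; hence $(b_y,y+1)\in S$, so row $y+1$ is nonempty and $b_{y+1}\ge b_y$. This gives non-decreasing behaviour up to $y^{\ast}$ (and, as a byproduct, that the occupied rows form a contiguous interval). Reflecting $S$ across a horizontal line and repeating yields non-increasing behaviour above $y^{\ast}$; applying the same argument to the reflection across a vertical line, whose right boundary is $y\mapsto -a_y$, shows that $-a$ is unimodal, i.e. that $a$ is valley-unimodal.

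With both boundaries understood, I would argue by contradiction, using the elementary fact that a finite integer sequence fails to be monotonic exactly when it contains a strict \emph{peak} (indices $p<q<r$ whose middle value strictly exceeds the outer two) or a strict \emph{valley} (the reverse inequalities); this follows because a non-monotonic sequence has both a strictly ascending and a strictly descending consecutive step, which combine into a peak or a valley. So suppose the right-endpoint abscissas $x_1,\ldots,x_L$ of the length-$\ell$ rows, read in order of increasing height, are not monotonic, and take an offending triple of length-$\ell$ rows at heights $y'<y''<y'''$ with right-endpoint abscissas $x',x'',x'''$.

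If the triple is a valley, then $x''<x'$ and $x''<x'''$ exhibit a strict interior local minimum of $b$, which is impossible for a unimodal sequence: whichever side of $y^{\ast}$ the height $y''$ lies on, the monotonicity of $b$ on that side is violated. If instead the triple is a peak, this is where the fixed length $\ell$ enters decisively. Since all three rows have length $\ell$, their left endpoints satisfy $a=x-(\ell-1)$ rowwise, so the peak $x''>x'$, $x''>x'''$ of the right endpoints is simultaneously a strict interior local maximum of the left boundary $a$, contradicting its valley-unimodality. Either way we reach a contradiction, and the sequence is monotonic.

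The step I expect to be the crux is precisely this reduction to the two boundary shapes, and in particular the realization that one must extract full monotonicity rather than mere unimodality: a unimodal right boundary alone does permit a peak among the length-$\ell$ right endpoints (its apex can fall strictly between two shorter rows), and ruling this out genuinely requires the second boundary together with the rigidity that equal length imposes between the two endpoints of a row. The unimodality lemma itself is short once one thinks of pairing each right endpoint with the globally rightmost point of $S$ before invoking Lemma~\ref{l:hv_stair}; the only care needed there is the routine bookkeeping of the reflections and of possible ties in the location of the extreme column.
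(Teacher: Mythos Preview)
Your argument is correct but proceeds quite differently from the paper. The paper works directly from a non-monotone triple $i<i'<i''$ (treating the case $x_{i'}<x_i\le x_{i''}$ and declaring the remaining three analogous): since $(x_{i'}+1,y_{i'})\notin S$, the rightward half of row $y_{i'}$ together with the downward vertical ray from $(x_{i'}+1,y_{i'})$ separates $r_i$ from $r_{i'}$, so any $4$-connected path between them in $S$ must meet that ray; the symmetric upward ray catches a path from $r_{i''}$ to $r_{i'}$, and the two intersection points violate vertical convexity of column $x_{i'}+1$. Lemma~\ref{l:hv_stair} is never invoked. Your route instead first extracts a global picture---the right boundary $y\mapsto b_y$ is unimodal and the left boundary $y\mapsto a_y$ is valley-unimodal---via Lemma~\ref{l:hv_stair} applied against the globally extreme columns, and then disposes of the valley and peak configurations against these two shapes, with the fixed length $\ell$ being precisely what transfers a peak among right endpoints into a forbidden peak among left endpoints. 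The paper's argument is shorter and self-contained; yours isolates a reusable structural description of HV-convex polyominoes and makes explicit where the equal-length hypothesis actually bites (only in the peak case), a point the paper's symmetric case split leaves implicit.
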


\begin{proof}
If the sequence of $x$-coordinates $x_i$ is not monotonic, then there exists a triplet of indices $i$, $i'$ and $i''$ with $i<i'<i''$ leading to a configuration which is neither $x_{i} \leq x _{i'} \leq  x_{i''}$ nor $x_{i''} \leq x _{i'} \leq  x_{i}$. There are $4$ remaining permutations that cannot happen in an HV-convex polyomino. We show that it is not possible to have $x _{i'} < x_{i} \leq x_{i''}$ (see Figure~\ref{f:mono}(c)), the other $3$ cases being analogous. Suppose it is the case in order to reach a contradiction.

By definition, there is no point in $S$ to the right of $r_{i'}$.
Hence every 4-connected path from $r_i$ to $r_{i'}$ intersects the vertical ray going down from $p' =  (x_{i'} + 1, 0)$. Let $p$ be a point in this intersection. Similarly, let $p''$ be a point in the intersection of the path connecting $r_{i''}$ to $r_{i'}$ that is in the ray going up from $p'$. All $p$, $p'$, and $p''$ have the same $y$ coordinate, but $p' \not \in S$ while $p,p''$ are in $S$, which contradicts vertical convexity.
\end{proof}

\begin{figure}[ht!]
  \begin{center}
    	\includegraphics[scale=.75]{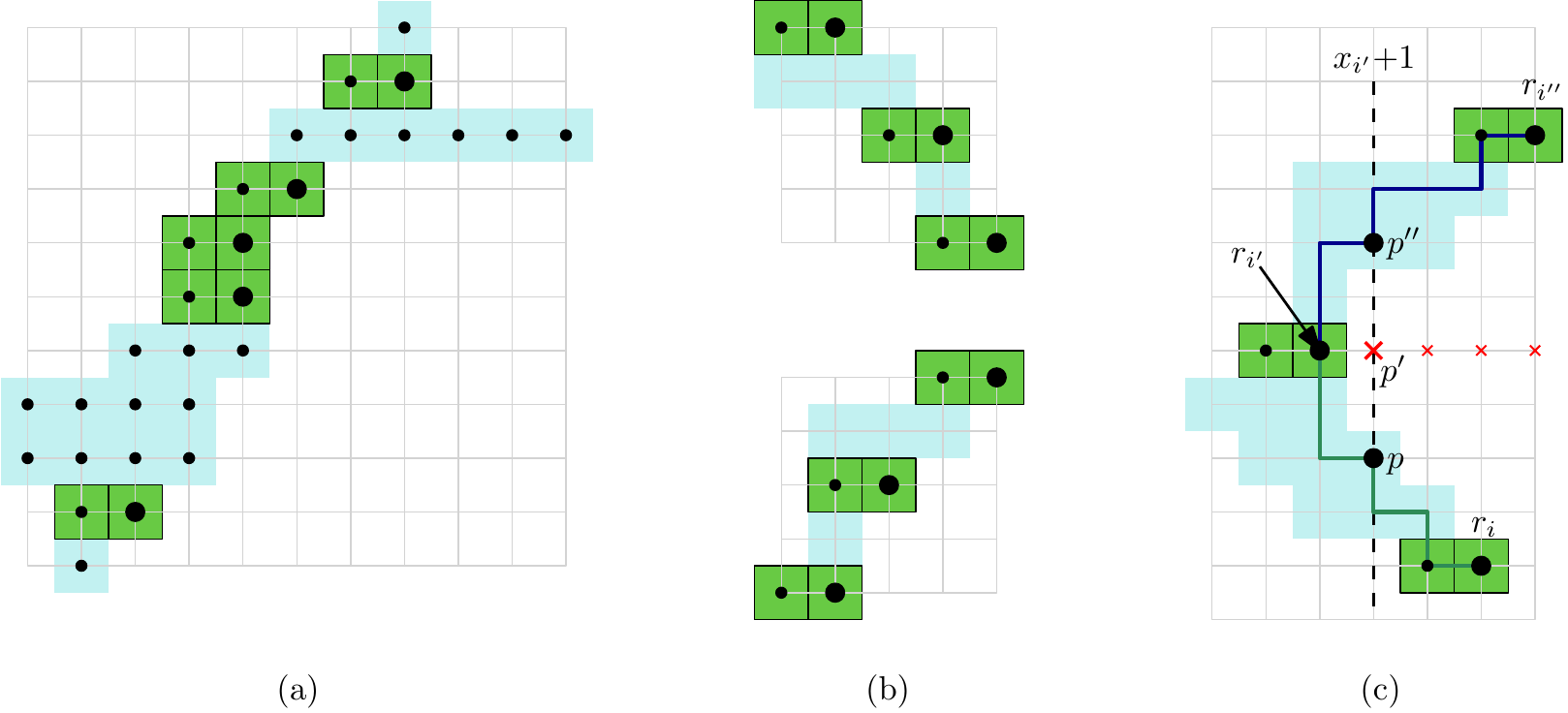}
	\end{center}
	\caption{\label{f:mono}\textbf{Monotonicity of the right endpoints coordinates} for the rows of fixed length $\ell$ of an HV-convex polyomino (here, $\ell=2$). (a) Lemma~\ref{l:hv_monotonic} states that for any length, the sequence of the $x$-coordinates of the right endpoints of the rows of a fixed length $\ell$ is monotonic (either increasing or decreasing). (b) Compatible configurations. (c) Non-monoticity is not compatible with the HV-convexity of a polyomino.
}
\end{figure}

\subsection{Shooting Algorithm with \texorpdfstring{$O(\log n)$}{O(log n)} Misses}

The previous lemmas allow us to develop an efficient algorithm for shapes that are HV-convex polyominoes.

\begin{theorem}\label{t:HV}
For any HV-convex polyomino $S$ of $n$ points, the Battleship complexity $\comp(S) = O(\log n )$.
\end{theorem}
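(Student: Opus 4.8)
The plan is to determine the unknown position $p$ in two phases: spend only $O(1)$ misses to reduce the problem to a one-dimensional search, and then spend $O(\log n)$ misses to finish by binary search. For the first phase I would shoot horizontally from the origin, to the right until a miss and to the left until a miss. Since $S$ is horizontally convex, the row of the ship through the origin is an interval, so these two scans (costing exactly two misses, regardless of how long the row is) reveal its length $\ell$ and the offset $j$ of the origin inside it. At this moment the set of possible positions is reduced to exactly one candidate per row of length $\ell$: writing $r_1,\ldots,r_L$ for the right endpoints of the rows of length $\ell$ ordered by $y$-coordinate, the remaining candidates are precisely the points $r_i-(\ell-j,0)$, since any such position reproduces the hits and misses observed so far. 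By Lemma~\ref{l:hv_monotonic} the $x$-coordinates of the $r_i$ form a monotone sequence while their $y$-coordinates are strictly increasing, so the $r_i$ trace a monotone staircase (Figure~\ref{f:mono}).

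It remains to identify which of the $L\le n$ rows of length $\ell$ contains the true position, a one-dimensional problem that I would attack by binary search on the index $i$. The observation that makes this feasible is that the shot-coordinate $d$ of our own right endpoint $r_{i^*}$, where $i^*$ denotes the unknown true index, is learned once and for all in Phase~1 and never changes during the game; hence every subsequent shot fired at $d+v$ lands on the fixed cell $r_{i^*}+v$ of $S$. For a median index $m$ of the current candidate range $[a,b]$ I would fire a shot whose displacement $v$ is computed from the known points $r_a,\ldots,r_b$ so that, thanks to the monotonicity of the staircase, the cell $r_{i^*}+v$ belongs to $S$ exactly when $i^*$ lies on one side of $m$. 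A hit then confines $i^*$ to one half of the range and a miss to the other half, and a miss is charged on only one of the two branches at each level.

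Summing at most $O(1)$ misses over the $O(\log L)=O(\log n)$ levels of this recursion, together with the two misses spent in Phase~1, would yield $\comp(S)=O(\log n)$ as required. Note that the argument is robust to small inefficiencies: even if a comparison needs a short constant-length burst of shots to re-anchor, the miss count per level stays $O(1)$ and the total remains $O(\log n)$.

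The step I expect to be the main obstacle is the design and analysis of the comparison probe in Phase~2. Because we only ever learn positions relative to the fixed reference $d$ and never the absolute location of $r_{i^*}$, we cannot simply aim at $r_m$; the difficulty is to exhibit, using only the known shape data and Lemma~\ref{l:hv_monotonic}, a displacement $v$ (possibly a constant-length sequence of shots) for which the membership $r_i+v\in S$ is monotone in $i$ across the current range and flips near the median, so that each comparison simultaneously halves the candidate set and contributes only $O(1)$ to the miss count. Establishing this monotonicity of membership along a translated copy of the right-endpoint staircase, while correctly handling the symmetric decreasing case and the base cases with a small number of remaining rows, is where the real work lies; I expect Lemma~\ref{l:hv_stair} to be useful for certifying that the probed cells are indeed present in $S$.
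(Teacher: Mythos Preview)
Your Phase~1 is exactly what the paper does, and your reduction to a set $P$ of at most one candidate per row of length $\ell$, together with the monotonicity from Lemma~\ref{l:hv_monotonic}, is correct. The gap is entirely in Phase~2, and you have put your finger on it yourself: you need a displacement $v$ (or a short burst) for which the predicate $r_i+v\in S$ is monotone in $i$ and flips near a median of your choosing, but you never construct one. This is not merely a matter of filling in details. For a general HV-convex polyomino the points $r_i$ need not lie on a line or be evenly spaced, and HV-convexity gives you no control over how a single translate $S-v$ meets an arbitrary monotone staircase; there is no evident reason why any fixed $v$ should separate the $r_i$ at an index you prescribe. So as it stands the proposal is a plan with its central mechanism missing.

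The paper avoids this obstacle by abandoning the median probe entirely. Instead it walks a $4$-connected staircase $s_1,s_2,\ldots$ starting from the known right endpoint $(k^+-1,0)$, at each step choosing between the two unit moves $s_j+(1,0)$ and $s_j+(0,1)$. Lemma~\ref{l:hv_stair} is used precisely here: since $p_{|P|}+s_j$ lies northeast of $p_i+s_j$ inside $S$, every candidate $p_i$ except possibly the topmost one makes at least one of the two moves a hit. Hence the two hit-sets together cover all but one point of $P$, so the larger one has size at least $(|P|-1)/2\ge |P|/3$ when $|P|\ge 3$. Shooting in that direction, a hit costs nothing and extends the staircase; a miss leaves at most $2|P|/3$ candidates, and one then tries the other direction to either terminate (two misses identify the topmost candidate) or continue the walk. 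Each miss thus shrinks $|P|$ by a constant factor, giving $O(\log n)$ misses overall. Your instinct that Lemma~\ref{l:hv_stair} is the key is right, but it is applied locally to the two unit steps of a growing path, not to a globally chosen displacement aimed at a median.
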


We notice that the result does not hold for either HV-convex lattice sets (HV-convexity alone does not forbid arbitrarily large parallelogram-free lattice sets), or for arbitrary polyominoes (we let the reader construct counter-examples as a tricky exercise).

We prove the $O(\log n )$ bound of Theorem~\ref{t:HV} by providing a shooting algorithm with at most $O(\log n )$ misses for locating the position of the ship. We call it the \emph{staircase shooting algorithm}.

\subsubsection*{The Staircase Shooting Algorithm}

The staircase shooting algorithm for HV-convex polyominoes works in two phases.
The first phase consists of two sequences of horizontal shots going away from the origin in both horizontal directions. First we shoot at $(k,0)$ with an increasing $k = 1, \ldots, k^+$ until we obtain a miss at $k=k^+$. We proceed similarly in the negative direction until we obtain a miss at $k=k^-$. This way, we determine two values $k^- < 0$ and $k^+ > 0$ such that all the shots $(k,0)$ with $k^- < k < k^+$ are hits while $(k^-,0)$ and $(k^+,0)$ are both misses. The difference $k^+ - k^- + 1$ provides the length $\ell$ of the row of $S$ containing the unknown position $p$. We used $2$ misses to obtain the value of $\ell$ and concluded the first phase.

We now present the decision tree of the remainder of our algorithm. At any node, the set $P$ corresponds to the set of possible positions.
The number of rows of length $\ell$ in $S$ is denoted $L$. After the first phase, we know that the unknown position $p$ belongs to one of these $L$ rows of length $\ell$ and we know the horizontal position is the $k^-$-th point of the row. Hence, at this point, we have $|P| = L$ with at most one position in $P$ for each row. The positions are denoted $p_i$ for $1 \leq i \leq |P|$, ordered by $y$-coordinates.
It follows from Lemma~\ref{l:hv_monotonic} that the sequence of the $x$-coordinates of $p_i$ is either increasing or decreasing. We assume without loss of generality that the sequence is increasing, the other case being analogous.
The problem is to further reduce the set $P$ of possible positions. At any point, if $|P|$ is at most $2$, then we distinguish the $2$ possible positions with only $1$ additional miss.

\begin{figure}[ht!]
  \begin{center}
		\includegraphics[scale=.75]{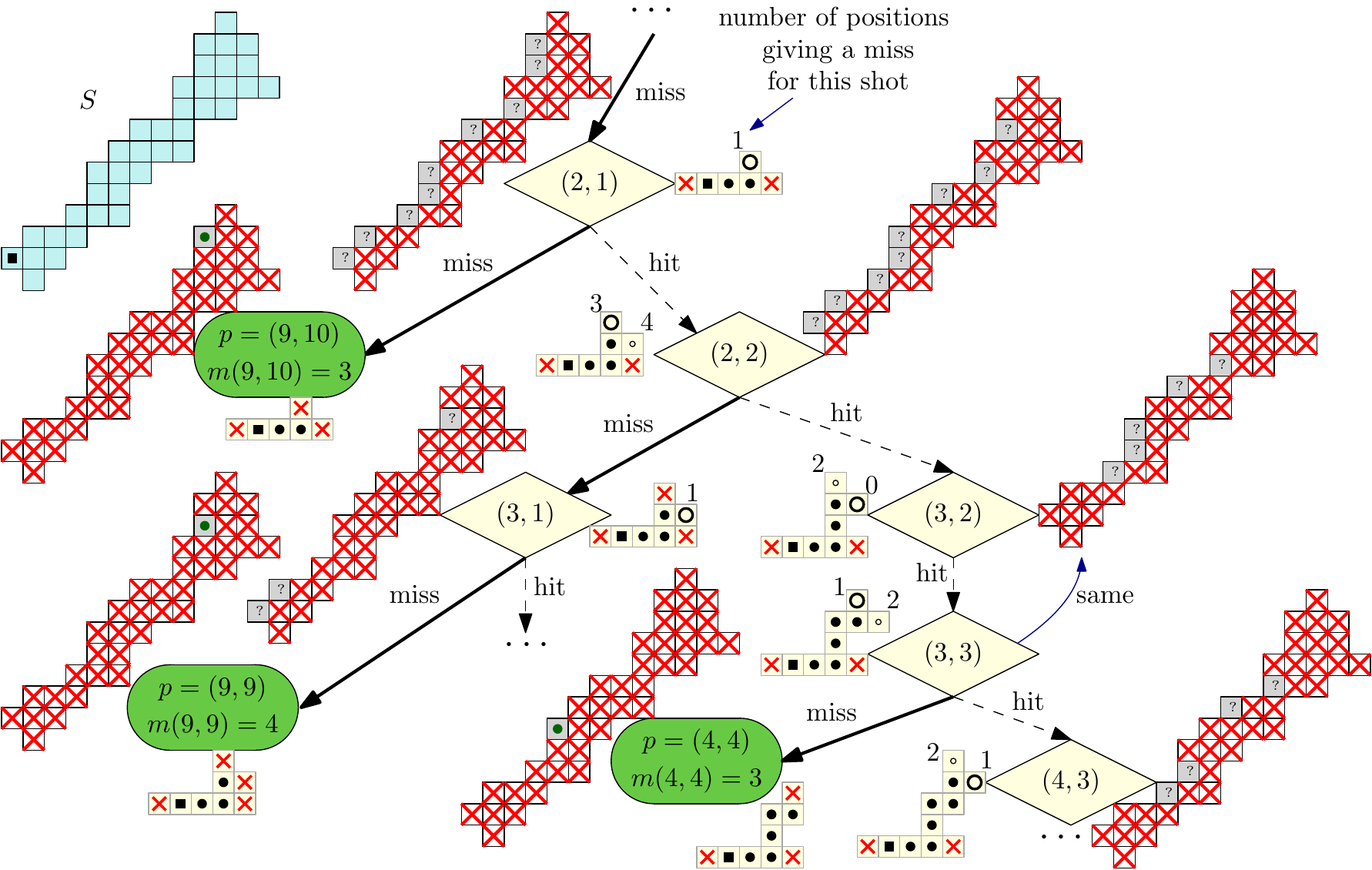}
	\end{center}
	\caption{\label{f:staircase}\textbf{Second phase of the staircase algorithm} on an HV-convex polyomino. Only part of the tree is represented.
}
\end{figure}

We now describe the second phase, assuming $|P|\geq 3$.
The sequence of hits follows a monotone $4$-connected path $(s_1,\ldots,s_J)$ of $J$ points of $S$ with $s_0 = (k^+-1,0)$ and shaped as a staircase going up and to the right, that is either $s_{j+1}=s_j + (1,0)$ or $s_{j+1}=s_j + (0,1)$.
For each $j$, we have to choose $s_{j+1}$ among the two possibilities. We proceed as in the heuristic described in Section~\ref{ss:classification}, choosing to shoot at the position $x$ that minimizes the number of positions $p_i \in P$ for which we would have a miss and let $x'$ denote the other choice. We now consider the two possible outcomes after shooting at $x$.

\begin{itemize}
    \item The child node after a \emph{hit}: At this new node, we have a new set of positions $P$ (which may or may not have been reduced), an unchanged number of misses, and a new node $s_{j+1} = x$ appended to the path. 
    
    \item The child node after a \emph{miss}: 
    In this case, the number of misses increased by $1$ and we cannot append $x$ to the path, since $x \not \in S$. We proceed with another shot at position $x'$ set to the other possibility to build the staircase path. If $x'$ is also a miss, then we determined that the position $p$ is the top-most point of $P$ namely $p = p_{|P|}$  (Claim~\ref{c:topmost}). 
    However, if $x'$ is a hit, then we append $s_{j+1}=x'$ to the staircase path. In this case, we will show that at least $1/3$ of possible positions $P$ have been discarded (Claim~\ref{c:bound}).     
\end{itemize}

\subsubsection*{Proof of the Claims}

We consider the conditions of the algorithm at a current node associated to a monotonous staircase path  $(s_1,\ldots,s_J)$ with $s_1 = (k^+ - 1, 0)$ such that all shots $s_j$ for $j = 1,\ldots,J$ provided hits. 
The set of the possible positions is a set of $|P|$ points $p_1,\dots,p_{|P|}$ at the $k^-$-th position in rows of length $\ell$.
The path $s_1,\ldots,s_{J}$ being all hits, we know that for any possible position $p_i$ and any shot $s_j$ of the path, the sum $p_i+s_j$ is a point of $S$.
We remind the reader that the points $p_i$ are ordered by $y$-coordinates and that we have assumed without loss of generality that their $x$-coordinates are increasing.  

\begin{claim}\label{c:topmost}
We consider the two new possible shots $s_j + (1,0)$ or $s_j + (0,1)$.
For all possible positions $p_i$ with $i < |P|$, one of the two shots provides a hit. In other words, the unique possible position for which we might obtain two misses is the top-most $p_{|P|}$.  
\end{claim}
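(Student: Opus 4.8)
The plan is to reduce the claim to a single application of Lemma~\ref{l:hv_stair}. First I would translate the hitting condition into membership in $S$: a shot $x$ is a hit for a candidate position $p_i$ exactly when $x + p_i \in S$. Since the staircase path $(s_1,\ldots,s_J)$ consists only of hits for every surviving candidate, in particular $s_J + p_i \in S$ for every $p_i \in P$. Writing $q_i := s_J + p_i \in S$, the two prospective shots $s_J+(1,0)$ and $s_J+(0,1)$ are hits for $p_i$ precisely when $q_i+(1,0)\in S$ and $q_i+(0,1)\in S$, respectively. Hence the claim becomes the purely geometric statement: for every $i < |P|$, at least one of $q_i+(1,0)$ and $q_i+(0,1)$ lies in $S$.

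Next I would set up the dominance relation required by Lemma~\ref{l:hv_stair}. The points $p_i$ sit at the $k^-$-th position of distinct rows of length $\ell$ and are ordered by $y$-coordinate, so their $y$-coordinates are strictly increasing; by the standing (without loss of generality) assumption their $x$-coordinates are non-decreasing. Thus for every $i<|P|$ we have $(p_i)_x \le (p_{|P|})_x$ and $(p_i)_y < (p_{|P|})_y$. Adding the common vector $s_J$ preserves both inequalities, giving $(q_i)_x \le (q_{|P|})_x$ and $(q_i)_y < (q_{|P|})_y$; in particular $q_i \ne q_{|P|}$ because their $y$-coordinates differ.

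With $q_i$ and $q_{|P|}$ two distinct points of $S$ such that $q_i$ is coordinatewise dominated by $q_{|P|}$, Lemma~\ref{l:hv_stair} applies verbatim and yields that either $q_i+(1,0)\in S$ or $q_i+(0,1)\in S$. Translating back, this means that one of the two shots $s_J+(1,0)$, $s_J+(0,1)$ is a hit for the position $p_i$. Since this holds for every $i<|P|$, the only candidate that can produce a miss on \emph{both} shots is the top-most one $p_{|P|}$, which is exactly the claim.

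I expect the only delicate point to be the strictness of the $y$-ordering: Lemma~\ref{l:hv_stair} requires the two points to be distinct, so I must ensure the dominance is never an equality. This is guaranteed because the candidates $p_i$ lie in distinct rows, forcing strictly increasing $y$-coordinates, whereas for the $x$-coordinates only the non-strict monotonicity supplied by Lemma~\ref{l:hv_monotonic} is needed. The rest is a routine bookkeeping of how the hit/miss outcomes correspond to membership of the translated points $q_i$.
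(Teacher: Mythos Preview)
Your proposal is correct and follows essentially the same argument as the paper: both translate the hit condition to membership of $p_i+s_J$ and $p_{|P|}+s_J$ in $S$, observe the northeast dominance of the latter over the former, and invoke Lemma~\ref{l:hv_stair}. Your write-up is in fact a bit more careful than the paper's, since you make explicit why the two points are distinct (strict $y$-ordering of candidates in different rows), which the paper leaves implicit.
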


\begin{proof}
We have to prove that for any possible position $p_i$ 
with $i < |P|$, then either $s_j + (1,0)$ or $s_j + (0,1)$ provides a hit. It means that 
either $p_i + s_j + (1,0) \in S$ or $p_i + s_j + (0,1) \in S$. 
This property is a direct consequence of Lemma~\ref{l:hv_stair}, because $p_i + s_j \in S$, $p_{|P|} + s_j \in S$, and
$p_{|P|}+s_j $ is in the northeast quadrant of $p_i+s_j$ (namely the $x$ and $y$-coordinates of $p_i +s_j$ are respectively lower than the ones of $p_{|P|} + s_j$).
\end{proof}

Claim~\ref{c:topmost} leads to a second claim under the assumption that $|P| \geq 3$.

\begin{claim}\label{c:bound}
At each current node with $|P| \geq 3$, by choosing 
between $s_j + (1,0)$ or $s_j + (0,1)$  the shot for which the number of possible positions $(p_i)_{1\leq i \leq I}$ remains the largest, the number of positions providing a miss is at most $\frac{2}{3} |P|$. 
\end{claim}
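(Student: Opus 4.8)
The plan is to reduce Claim~\ref{c:bound} to a short counting argument over the two candidate shots, using Claim~\ref{c:topmost} as the only geometric ingredient. Write the two possible continuations of the staircase as $s_j+(1,0)$ and $s_j+(0,1)$, and record the positions for which each is a hit as the index sets
\[ A = \{\, i : p_i + s_j + (1,0) \in S \,\}, \qquad B = \{\, i : p_i + s_j + (0,1) \in S \,\}. \]
Shooting at $s_j+(1,0)$ produces a miss for exactly those $p_i$ with $i \notin A$, hence $|P|-|A|$ positions provide a miss; similarly $s_j+(0,1)$ gives $|P|-|B|$ miss positions. The rule of the algorithm---keep the largest number of possible positions---selects whichever of $A,B$ is larger, so the number of positions providing a miss for the chosen shot equals $|P|-\max(|A|,|B|)$. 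Thus the whole claim amounts to establishing $\max(|A|,|B|) \ge |P|/3$.

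First I would invoke Claim~\ref{c:topmost}: every position $p_i$ with $i<|P|$ is a hit for at least one of the two shots, which is precisely the inclusion $\{1,\ldots,|P|-1\}\subseteq A\cup B$. This yields $|A\cup B|\ge |P|-1$, and therefore
\[ |A| + |B| \;\ge\; |A\cup B| \;\ge\; |P|-1, \]
so the larger of the two sets satisfies $\max(|A|,|B|)\ge \tfrac{1}{2}(|P|-1)$. Consequently the number of positions producing a miss for the chosen shot is at most
\[ |P| - \tfrac{1}{2}(|P|-1) \;=\; \tfrac{1}{2}(|P|+1). \]

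It remains to verify the elementary inequality $\tfrac{1}{2}(|P|+1)\le \tfrac{2}{3}|P|$, which rearranges to $|P|\ge 3$---exactly the standing hypothesis of the claim---completing the bound. I do not anticipate a genuine obstacle: all the geometry has already been absorbed into Claim~\ref{c:topmost} (and, through it, into Lemma~\ref{l:hv_stair}), so what remains is a union bound followed by a pigeonhole step on $A$ and $B$. The single point worth stating carefully is that the bound is tight at $|P|=3$, where $\tfrac{1}{2}(|P|+1)=2=\tfrac{2}{3}|P|$; this both pins down the constant $2/3$ and explains why the cases $|P|\le 2$ are treated separately, being resolved by one distinguishing miss. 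Finally, since a miss at the chosen shot leaves at most $\tfrac{2}{3}|P|$ consistent positions, the companion shot $s_{j+1}=x'$ either is a miss---forcing $p=p_{|P|}$ by Claim~\ref{c:topmost}---or is a hit that discards at least $\tfrac{1}{3}|P|$ positions, which is the progress measure driving the $O(\log n)$ bound.
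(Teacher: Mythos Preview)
Your proposal is correct and follows essentially the same route as the paper: define the two hit sets (your $A,B$ are the paper's $P_{(1,0)},P_{(0,1)}$), invoke Claim~\ref{c:topmost} to obtain $|A|+|B|\ge |P|-1$, and conclude $\max(|A|,|B|)\ge (|P|-1)/2 \ge |P|/3$ using $|P|\ge 3$. The only cosmetic difference is that you phrase the final step as $\tfrac{1}{2}(|P|+1)\le \tfrac{2}{3}|P|$ rather than $(|P|-1)/2\ge |P|/3$, but these are equivalent.
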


\begin{proof}
Let $P_{(1,0)}$ be the set of positions $p_i$ such that $p_i + s_J + (1,0) \in S$ and let $P_{(0,1)}$ be the set of of positions $p_i$ such that $p_i + s_J + (0,1) \in S$. Let $n_{(1,0)} = |P_{(1,0)}|$ and $n_{(0,1)} = |P_{(0,1)}$.
According to the claim~\ref{c:topmost}, with the exception of the topmost possible position $p_{|P|}$, all the others give a hit for one of the two possible shots.
Hence, we have
$n_{(1,0)} + n_{(0,1)} \geq |P| - 1$ and
\[\max(n_{(1,0)}, n_{(0,1)}) \geq \frac{|P| - 1}{2} \geq \frac{|P|}{3},\]
since $|P| \geq 3$.
Then with the shot $s_J + (1,0)$ or $s_J + (0,1)$ corresponding to the maximum of $n_{(1,0)}, n_{(0,1)}$, we have at least $1/3$ of the possible positions giving a hit.
\end{proof}

\subsubsection*{Complexity Analysis}

We need to count the number of misses that the staircase algorithm takes in the worst case.
The first phase of the staircase algorithm uses $2$ misses. In the second phase, until the number of possible positions falls under $3$, each miss reduces the number of possible positions by a factor of at most $2/3$ of the previous value. Since the initial value of $|P|$ is at most $n$, the number of misses during this phase is at most $k$ where $k$ is the smallest integer verifying  $(2/3)^k n \leq 3$. It follows that $k = O(\log n)$. In the last step, we finish the computation with at most $2$ more misses.

\section{Digital Convex Sets} \label{s:convex}

A shape $S \subset \Z^2$ is \emph{digital convex} if there exists a convex polygon $K \subset \R^2$ such that $S = K \cap \Z^2$.
Digital convexity is a stronger property than HV-convexity, however it does not imply $4$-connectivity.
We can test if a set $S \subset \Z^2$ is digital convex by verifying if $\mathrm{conv}(S) \cap \Z^2 = S$ (Figure~\ref{f:dc}), where $\mathrm{conv}(S)$ denotes the continuous convex hull of $S$. This property is exploited in~\cite{CFG19} to test digital convexity in linear time.

\begin{figure}[t]
  \begin{center}
		\includegraphics[scale=.75]{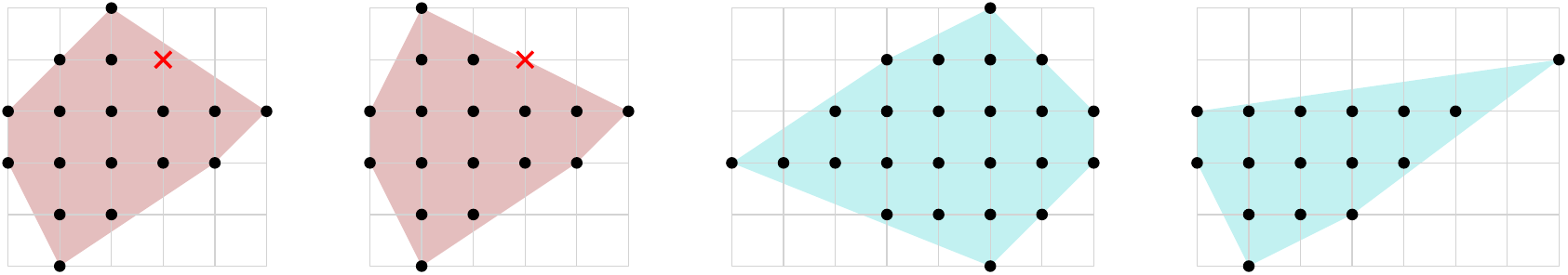}
	\end{center}
	\caption{\label{f:dc}\textbf{Digital convexity.} The two lattice sets on the left are not digital convex while the two on the right are, since there is no other lattice point in their convex hulls.}
\end{figure}

In this section, we investigate the Battleship complexity of digital convex sets. We choose this family of lattice sets not only because it is one of the main classes of geometric shapes but also because in a continuous variant of the problem, the Battleship complexity of the convex sets is bounded by a constant. Determining if the same holds for the digital version is an intriguing question.
In Section~\ref{ss:digitalconvex}, we prove a new version of the Blaschke-Lebesgue inequality and in Section~\ref{ss:loglog}, we present an $O(\log\log n)$ algorithm.

\subsection{Discrete Blaschke-Lebesgue Inequality}\label{ss:digitalconvex}

Digital convex sets have inequalities relating the lattice diameter and the lattice width. First, we recall their respective definitions. Let $S$ be a digital convex set. Its \emph{lattice diameter} $d(S)$ is the maximum number of points of $S$ on a line, minus $1$ (Figure~\ref{f:diameter}). It follows that the lattice diameter of a single point is $0$.
A \emph{Diophantine line} is a line containing at least two lattice points. Two Diophantine lines are \emph{consecutive} if they are parallel to each other and there is no lattice point between them. The \emph{lattice width} $w(S)$ of $S$ is the minimum number of consecutive Diophantine lines covering $S$, minus $1$ (Figure~\ref{f:diameter}). The lattice width of a single point is again $0$. More formally, the lattice width can be expressed as 
\[w(S)= \min _{u\in \Z ^2 \setminus \{(0,0)\}} \;\; \max_{a,b \in S} \; u \cdot (b-a). \]

\begin{figure}[hbt]
  \begin{center}
		\includegraphics[scale=.75]{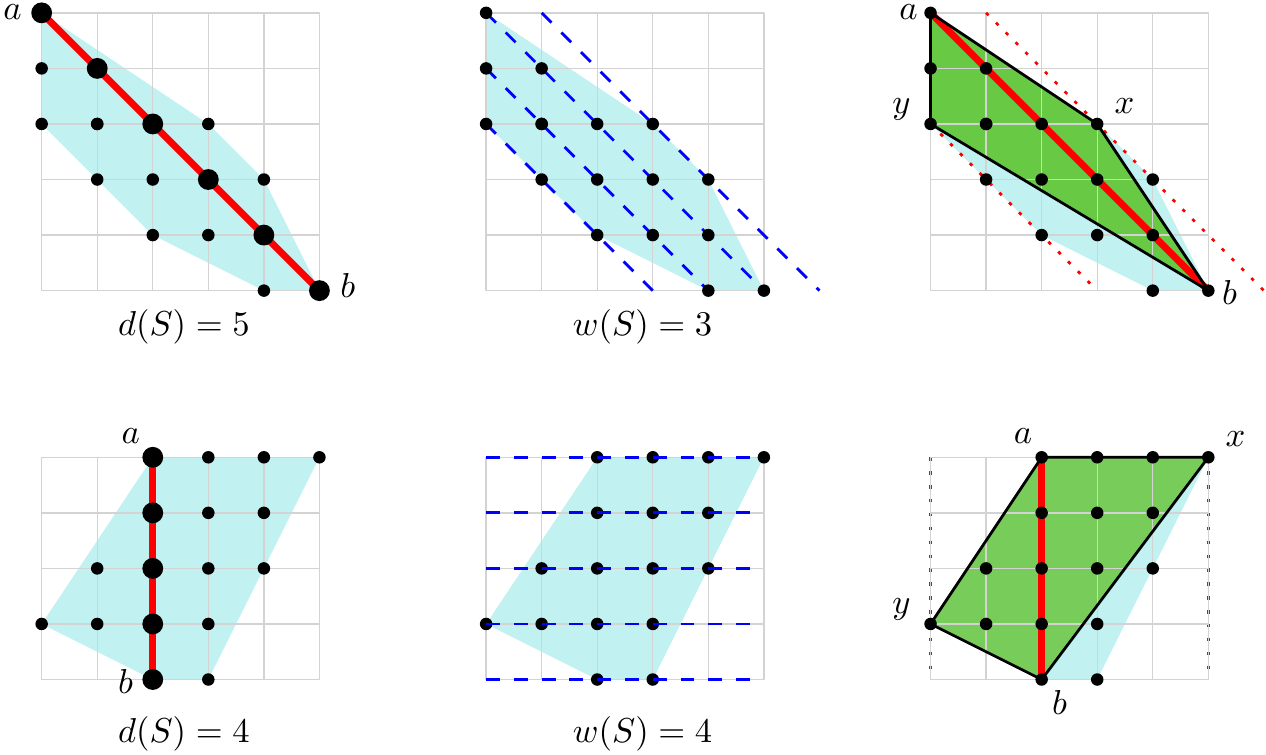}
	\end{center}
	\caption{\label{f:diameter}\textbf{Diameter and  lattice width} of two lattice sets. 
    On the right, we show the quadrilaterals $xayb$ used in the proof of Lemma~\ref{l:bla}.
}
\end{figure}

The \emph{continuous width} of a convex body $K$ is the minimum distance between two parallel lines enclosing $K$ and is denoted by $width(K)$.
The Blaschke-Lebesgue theorem states that the area of a convex body $K \subset \R^2$ is at least $\frac{\pi - \sqrt 3}{2} width(K)^2$. This lower bound is achieved when $K$ is the so called Reuleaux triangle.
I. Bárány and Z. Füredi~\cite{Bar01} provided the following discrete version of the theorem. 

\begin{lemma}\label{l:Barany}
For any digital convex set $S$, we have $w(S) \leq \lfloor \frac{4}{3} d(S) \rfloor + 1$ and for any fixed diameter, this bound is best possible.
\end{lemma}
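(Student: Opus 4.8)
The plan is to recast both lattice invariants in terms of the symmetric \emph{difference body} $T = \conv(S) - \conv(S)$ and a pair of successive minima, which turns the inequality into a transference-type statement. First I would record that $d(S)$ and $w(S)$ are invariant under the unimodular affine group $x \mapsto Ux + t$ with $U \in \mathrm{GL}_2(\Z)$, so no normalization is lost and we may reason about $T$ directly. Writing $\|\cdot\|_T$ for the gauge of $T$ and $\lambda_1(T) = \min_{e \in \Z^2 \setminus \{(0,0)\}} \|e\|_T$, the key translation is twofold: (a) $w(S) = \min_{u \in \Z^2\setminus\{(0,0)\}} \max_{z \in T} u\cdot z = \lambda_1(T^\circ)$, the first minimum of the polar body, since $\max_{a,b\in S} u\cdot(b-a)$ is exactly the support width of $T$ in direction $u$; and (b) a primitive vector $e$ with $m e \in T$ forces, via digital convexity, a Diophantine line carrying $m+1$ points of $S$, so that $d(S)$ and $\lfloor 1/\lambda_1(T)\rfloor$ agree up to an additive constant. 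It is precisely here that the hypothesis ``digital convex'' (rather than merely HV-convex) is used: it is what guarantees the longest lattice chord nearly fills the widest chord of $\conv(S)$.

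With these identities in hand, the easy half of the statement --- that $w(S) = O(d(S))$ --- follows from the planar transference inequality $\lambda_1(T)\,\lambda_2(T^\circ) \le 2$ together with $\lambda_1(T^\circ) \le \lambda_2(T^\circ)$ and $1/\lambda_1(T) \le d(S) + O(1)$ from step (b); this already yields $w(S) \le 2\,d(S) + O(1)$. So the whole difficulty is concentrated in improving the constant from $2$ to the sharp $4/3$ and in pinning down the exact floor and the $+1$. For this I would abandon the soft bounds and argue directly with the geometry of $T$: normalize the diameter direction to the horizontal axis so that $\pm(d,0) \in T$ but $\pm(d+1,0)\notin T$, and then show that the thinnest lattice direction is realized by one of a small, explicit family of vectors attached to the extremal triangular silhouette, optimizing a piecewise-linear function of the few boundary slopes of $T$. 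The extremal configuration is the lattice analogue of the Reuleaux triangle appearing in the continuous Blaschke--Lebesgue theorem, and the factor $4/3$ emerges from balancing three symmetric chord-length constraints, just as the constant $\tfrac{\pi-\sqrt3}{2}$ does in the continuous case.

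Finally, for the ``best possible'' clause I would exhibit, for every target diameter $d$, an explicit digital convex set whose longest lattice line has $d+1$ points and whose lattice width equals $\lfloor \tfrac43 d\rfloor + 1$, obtained by taking the lattice points inside a suitably scaled copy of the extremal triangle and verifying both quantities by direct computation. The main obstacle, and the part I expect to require genuine care, is the sharp upper bound of the second paragraph: the successive-minima machinery comfortably delivers the correct \emph{order} but is constitutionally unable to produce the exact constant $4/3$, so the tight inequality forces a finite but delicate case analysis over the possible combinatorial types of the boundary of $T$ near its extreme chords.
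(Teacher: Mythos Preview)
The paper does not prove this lemma at all: it is quoted verbatim from B\'ar\'any and F\"uredi~\cite{Bar01} and used as a black box in the proof of Lemma~\ref{l:bla}. There is therefore no ``paper's own proof'' to compare your proposal against, and for the purposes of this paper a one-line citation is what is expected here.

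As for the content of your sketch: the reduction to the difference body $T=\conv(S)-\conv(S)$ and the identification $w(S)=\lambda_1(T^{\circ})$ are correct and standard, and you are right that transference alone yields only $w(S)\le 2\,d(S)+O(1)$ rather than the sharp $4/3$. Your step~(b), however, is looser than you suggest: from $me\in T$ with $e$ primitive one gets a segment of lattice length $m$ inside $\conv(S)$ only after a translation argument (the endpoints realizing $me=a-b$ need not be lattice points), so the claimed equality of $d(S)$ and $\lfloor 1/\lambda_1(T)\rfloor$ ``up to an additive constant'' needs to be made precise before it can carry the sharp floor and the $+1$. More importantly, the decisive part---the ``finite but delicate case analysis'' yielding the exact constant $4/3$---is acknowledged but not carried out, so what you have is a plan rather than a proof. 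If you actually want to prove the lemma rather than cite it, the original argument of B\'ar\'any and F\"uredi is short and direct (it normalizes the diameter direction and bounds the width by an explicit lattice-geometric count), and would be easier to reproduce than to rediscover via successive minima.
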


This inequality is not exactly an equivalent of the Blaschke-Lebesgue theorem since a lower bound on the discrete diameter does not directly provide a lower bound on the area. It remains a small gap to fill in order to obtain a more standard equivalent of the Blaschke-Lebesgue theorem for digital convex sets.
We provide a new discrete inequality closer to the original Blaschke-Lebesgue theorem where the number of points of the lattice set $S$ plays the role of the area and the lattice width plays the role of the width. 

\begin{lemma}\label{l:bla}
For any digital convex set $S$ of $n$ points, we have $n \geq \frac{3}{8} w(S)^2 - \frac{1}{2} w(S) +3$.
\end{lemma}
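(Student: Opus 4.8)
The plan is to exploit the geometry relating the lattice diameter $d(S)$ and the lattice width $w(S)$, turning an area estimate into a lattice-point count via Pick's theorem. First I would normalize the configuration: lattice-point counts, digital convexity, the lattice diameter, and the lattice width are all invariant under unimodular transformations, so I may assume the diameter is realized horizontally. Let $a,b\in S$ span the lattice diameter; after the transformation $b-a=(d(S),0)$, and the $d(S)+1$ consecutive lattice points $a,a+(1,0),\ldots,b$ all lie in $S$ (they form a contiguous segment because $S$ is digitally convex). Let $x\in S$ be a point of maximum height $h_1\ge 0$ and $y\in S$ a point of minimum height $-h_2\le 0$ (both signs hold since the row through $ab$ meets $S$). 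The object to analyze is the quadrilateral $Q=\conv\{x,a,y,b\}$, whose diagonal $ab$ is horizontal with $x$ above and $y$ below, exactly the quadrilateral $xayb$ of Figure~\ref{f:diameter}. Since $Q\subseteq \conv(S)$ and $S=\conv(S)\cap\Z^2$, we get $Q\cap\Z^2\subseteq S$, so $n\ge |Q\cap\Z^2|$.

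Two inequalities drive the bound. On one hand, taking the direction $u=(0,1)$ in the definition of the lattice width gives $w(S)\le h_1+h_2$, because the horizontal Diophantine lines $y=k$ are consecutive and cover $S$. On the other hand, the area of $Q$ splits along the diagonal $ab$ into the triangles $axb$ and $ayb$, so that
\[ \mathrm{Area}(Q)=\frac{1}{2}\,d(S)\,h_1+\frac{1}{2}\,d(S)\,h_2=\frac{1}{2}\,d(S)\,(h_1+h_2)\ge \frac{1}{2}\,d(S)\,w(S). \]
Feeding in the discrete Blaschke--Lebesgue inequality of Lemma~\ref{l:Barany} in the form $d(S)\ge \frac{3}{4}\bigl(w(S)-1\bigr)$ then yields $\mathrm{Area}(Q)\ge \frac{3}{8}w(S)^2-\frac{3}{8}w(S)$.

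Finally I would convert the area estimate into a point count. Because $x,a,y,b\in\Z^2$, Pick's theorem applies to $Q$ and gives $|Q\cap\Z^2|=\mathrm{Area}(Q)+B/2+1$, where $B$ is the number of boundary lattice points; in the generic case $B\ge 4$ (the four vertices), while if $x$ or $y$ lies on the line $ab$ then $Q$ degenerates to a triangle with $B\ge 3$. Combining everything,
\[ n\;\ge\;\mathrm{Area}(Q)+\frac{B}{2}+1\;\ge\;\frac{3}{8}w(S)^2-\frac{3}{8}w(S)+3\;\ge\;\frac{3}{8}w(S)^2-\frac{1}{2}w(S)+3, \]
where the last step uses $w(S)\ge 0$. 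I expect the main obstacle to be precisely this conversion. The lemma is the genuinely discrete analogue of Blaschke--Lebesgue exactly because area must be replaced by a point count, and the delicate part is controlling the boundary term $B$ and treating the degenerate and small-$w(S)$ cases (when $Q$ collapses to a triangle or a segment) so that the additive constant $3$ survives; for $w(S)\ge 4$ even the triangle case clears the target, and the few remaining small widths would be checked directly. The real insight is geometric: collapsing the whole problem onto the single quadrilateral $Q=\conv\{x,a,y,b\}$ and reading its area off the diameter and the vertical extent is what makes both driving inequalities available at once.
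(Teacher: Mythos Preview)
Your proof is essentially identical to the paper's: both form the quadrilateral $Q=\conv\{x,a,y,b\}$ from the diameter endpoints and the two points extremal in the perpendicular direction, bound its area below by $\tfrac{1}{2}d(S)w(S)$, apply Pick's theorem with at least four boundary lattice points to get $|Q\cap\Z^2|\ge A+3$, and then feed in Lemma~\ref{l:Barany} in the form $d(S)\ge \tfrac{3}{4}(w(S)-1)$. Your unimodular normalization is only a cosmetic convenience, and you are in fact a bit more careful than the paper about the degenerate (triangle/segment) and small-width cases, which the paper does not discuss.
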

\begin{proof}
Let us denote $a$ and $b$ the pair of extreme points providing the diameter $d(S)$. It follows $b-a=d(S)v$ where $v$ is the vector in the direction $b-a$ with coprime coordinates. Let $u$ be the rotation of $v$ by $\frac{\pi}{2}$.
We consider the points $x$ and $y$ of $S$ minimizing and maximizing the dot product with $u$. By definition of the lattice width, $u \cdot (y-x) \geq w(S)$ (i).
The four points $x$, $a$, $y$, and $b$ define the convex quadrilateral $xayb$ (Figure~\ref{f:diameter}). Its area is $A=\frac{1}{2} |\det(y-x,b-a)|=\frac{1}{2} |d(S)u\cdot(x-y)|$. 
With (i), we obtain $A \geq \frac{1}{2} d(S) w(S)$ (ii). 

Pick's theorem allows us to calculate the number of lattice points in the quadrilateral $xayb$ from its area. We recall the formula $A = i+\frac{e}{2}-1$ where $i$ is the number of interior points and $e$ the number of points on the boundary of $xayb$. By denoting $n_{xayb}$ the number of lattice points in the quadrilateral $xayb$, we have $n_{xayb} = i+e$. Then Pick's formula provides $n_ {xayb}=A+1+\frac{e}{2}$. The number of points on the boundary of the quadrilateral being at least $4$, we have 
$n_{xayb} \geq A+3$. With the bound (ii) on the area $A$, we obtain $n_{xayb}\geq \frac{1}{2} d(S)w(S) +3$. As the set $S$ is digital convex, it contains all the lattice points in the quadrilateral $xayb$: $n\geq n_{xayb}$. Then we have $n\geq \frac{1}{2} d(S)w(S) +3$ (iii).

Lemma~\ref{l:Barany} provides the bound $w(S)\leq \frac{4}{3}d(S)+1$ which can be rewritten $d(S)\geq \frac{3}{4} (w(S)-1)$. With (iii), it gives $n \geq \frac{3}{8} w(S)^2 - \frac{1}{2} w(S) +3$.
\end{proof}

We can write a similar bound which is easier to use as follows. 

\begin{lemma}\label{l:bla2}
For any digital convex set $S$ of $n$ points, we have $n \geq \frac{1}{4} w(S)^2 $.
\end{lemma}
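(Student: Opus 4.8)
The plan is to obtain this cleaner bound directly from Lemma~\ref{l:bla}, which already furnishes the sharper estimate $n \geq \frac{3}{8} w(S)^2 - \frac{1}{2} w(S) + 3$. Since we only wish to weaken this inequality into the tidy form $n \geq \frac{1}{4} w(S)^2$, no new geometry is needed: it suffices to verify the purely algebraic fact that the right-hand side of Lemma~\ref{l:bla} dominates $\frac{1}{4} w(S)^2$. Writing $w = w(S) \geq 0$, the task reduces to checking that
\[
\frac{3}{8} w^2 - \frac{1}{2} w + 3 \;\geq\; \frac{1}{4} w^2 .
\]

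The next step is to rearrange this into a single nonnegativity statement. Subtracting $\frac{1}{4} w^2$ from both sides leaves $\frac{1}{8} w^2 - \frac{1}{2} w + 3 \geq 0$, and multiplying through by $8$ turns this into $w^2 - 4w + 24 \geq 0$. I would then observe that this quadratic in $w$ has positive leading coefficient and discriminant $16 - 96 = -80 < 0$, so it is strictly positive for every real $w$, and in particular for the (nonnegative, integer) value $w(S)$. Chaining this with Lemma~\ref{l:bla} then yields $n \geq \frac{3}{8} w(S)^2 - \frac{1}{2} w(S) + 3 \geq \frac{1}{4} w(S)^2$, as claimed.

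I expect no genuine obstacle here, since the statement is deliberately a softening of Lemma~\ref{l:bla}: the monomial bound $\frac14 w(S)^2$, free of lower-order terms, is simply more convenient to feed into the $O(\log\log n)$ analysis of Section~\ref{ss:loglog} than the exact quadratic. The only point worth confirming is that the slack between the two bounds, namely the quantity $\frac18 w^2 - \frac12 w + 3$, never becomes negative; the discriminant computation above settles this once and for all, so the whole argument is a one-line consequence of the preceding lemma.
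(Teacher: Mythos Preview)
Your proof is correct and follows essentially the same route as the paper: both deduce the bound directly from Lemma~\ref{l:bla} by verifying the elementary algebraic inequality $\frac{3}{8}x^2 - \frac{1}{2}x + 3 \geq \frac{1}{4}x^2$ for all real $x$. The paper simply asserts this inequality, while you justify it via the discriminant; otherwise the arguments are identical.
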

\begin{proof}
We have $\frac{1}{4} x^2  \leq \frac{3}{8} x^2 - \frac{1}{2} x +3$ for any real $x$. Then we can rewrite Lemma~\ref{l:bla} with $\frac{1}{4} w(S)^2 $ as new lower bound. We could even write $n \geq \frac{1}{4} w(S)^2 +2$.
\end{proof}

Lemma~\ref{l:bla2} shows that the lattice width is bounded by the square root of the number of points of a digital convex sets. This relation  is the key point for proving the complexity of the next algorithm. 

\subsection{Algorithm with \texorpdfstring{$O(\log \log n)$}{O(log log n)} Misses}\label{ss:loglog}

In this section, we prove the following theorem.

\begin{theorem}\label{t:final}
For any digital convex set $S$ of $n$ points, the Battleship complexity $\comp(S) = O(\log \log n )$.
\end{theorem}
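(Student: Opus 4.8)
The plan is to prove Theorem~\ref{t:final} by a doubly recursive ``width–halving'' strategy whose cost satisfies a recurrence of the form $T(m) = T(O(\sqrt m)) + O(1)$, which unfolds to $O(\log\log n)$. The engine of the recursion is Lemma~\ref{l:bla2}: whenever the current set of possible positions $P$ is digital convex, its lattice width satisfies $w(P) \le 2\sqrt{|P|}$, so $P$ meets only $w(P)+1 = O(\sqrt{|P|})$ consecutive Diophantine lines orthogonal to the width direction. The goal of a single round is to spend a constant number of misses to shrink $P$ from $m$ points to $O(\sqrt m)$ points, while keeping $P$ digital convex so that the same bound applies again in the next round.

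For the reduction step, let $u$ be a direction realizing the lattice width of the current digital convex candidate set $P$, and let $v$ be the primitive lattice vector orthogonal to $u$ (the ``long'' direction). Since $P$ and all shot outcomes are known to us, $u$ and $v$ are computed without any shot. I then shoot along $v$ exactly as in the first phase of the staircase algorithm of Section~\ref{s:hv}: starting from the origin, fire at $kv$ for increasing $k$ until a miss at $k=k^+$, and for decreasing $k$ until a miss at $k=k^-$. This costs exactly $2$ misses and, by the same reasoning used for horizontal rows, determines the length $\ell = k^+-k^--1$ of the $v$-layer (the intersection of $S$ with the line through $p$ in direction $v$) containing the true position $p$, together with the offset of $p$ inside that layer. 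The decisive counting observation is that the $v$-layers are precisely the intersections of $S$ with the lines orthogonal to $u$, and there are exactly $w(P)+1$ of them; each such layer contains at most one point at the prescribed offset, so the surviving candidate set $P'$ satisfies $|P'| \le w(P)+1 \le 2\sqrt{|P|}+1 = O(\sqrt m)$, which is the desired shrinkage.

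The main obstacle, and the step I would treat most carefully, is preserving the invariant: that the shrunken set $P'$ is again digital convex, so Lemma~\ref{l:bla2} can be reapplied. Here $P'$ consists of the points of $P$ lying at one fixed offset inside the $v$-layers of one fixed length $\ell$. Using the convexity of $S$ (so that layer length is a concave, hence unimodal, function of the layer index and each layer endpoint traces a convex chain), one argues that these points form a monotone chain, in the spirit of Lemma~\ref{l:hv_monotonic}, that is in convex position and encloses no further lattice point in its hull; the degenerate cases, namely a run of consecutive layers of constant length $\ell$ (which forces the candidates onto a single line and hence into a consecutive lattice segment) and at most two isolated layers attaining length $\ell$, are exactly the ones to verify explicitly. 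Granting this invariant, the reduction yields the recurrence $T(m) \le T(2\sqrt m + 1) + 2$, with base case $|P|\le 2$ handled by one additional miss; rewriting it in terms of $\log m$ shows that the candidate size reaches $O(1)$ after $O(\log\log n)$ rounds of two misses each, whence $\comp(S) = O(\log\log n)$.
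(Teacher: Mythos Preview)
Your overall strategy—shoot along the direction orthogonal to the lattice-width direction of the current candidate set, spend two misses to locate the layer, and recurse—is exactly the paper's \emph{width shooting algorithm}, and the recurrence $T(m)\le T(O(\sqrt m))+O(1)$ together with the appeal to Lemma~\ref{l:bla2} is the same engine. The gap is precisely where you flag it: your recursion tracks $|P|$ and therefore needs $P$ to remain digital convex so that Lemma~\ref{l:bla2} gives $w(P)\le 2\sqrt{|P|}$ at every round. Your justification of that invariant rests on the assertion that, for a digital convex $S$, the $v$-layer lengths form a concave (hence unimodal) sequence, yielding your ``single run or at most two isolated layers'' dichotomy. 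That assertion is false in the lattice setting: for instance, with $S=K\cap\Z^2$ where $K=\{(x,y):0\le y\le 10,\ 0.3y\le x\le 0.3y+2.1\}$, the horizontal layer lengths are $3,2,2,3,2,2,2,2,2,2,3$, which is neither concave nor unimodal, and there are three non-consecutive layers of length~$3$. Even if the invariant happened to hold after round one, you would still need it in every subsequent round, where $P$ is no longer $S$ and your layer-length reasoning about $S$ no longer controls the structure of $P$.

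The paper sidesteps this entirely by never asking $P$ to be digital convex. It tracks instead $N=|\operatorname{conv}(P)\cap\Z^2|$, which \emph{is} digital convex by construction, so Lemma~\ref{l:bla2} always yields $w(P)=w(\operatorname{conv}(P)\cap\Z^2)\le 2\sqrt{N}$. The key technical step is Claim~\ref{c:dio}: using only the digital convexity of $S$ (not of $P$), the two misses in direction $u$ force $\operatorname{conv}(P')$ to contain at most one lattice point on each Diophantine line parallel to $u$, whence $N'\le w(P)+1\le 2\sqrt{N}+1\le N^{3/4}$ once $N\ge 25$. In short, your plan is right, but the missing idea is to pass to the convex hull of the candidate set rather than to argue that the candidate set itself stays digital convex.
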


We call this algorithm the \textit{width shooting algorithm} because it is mainly based on shots in the direction given by the lattice width of the set of possible positions. Next, we describe the algorithm.

Consider a node in the tree associated with a set $P$ of possible positions. Initially $P = S$, but as the algorithm progresses, positions will be removed from $P$.
We compute the lattice width $w(P)$, which is achieved by a vector $v \in \Z^2$ with coprime coordinates. Then, we rotate $v$ by $\frac{\pi}{2}$, obtaining a vector $u$ that is parallel to $w(P) + 1$ lines covering the set $P$.
Then, we shoot in directions $u$ and $-u$ with shots of the form $ku$ for positive and negative integer $k$ until we obtain misses at points $k^+u$ and $k^-u$ with hits in between.
These two misses and the previous hits lead to a new current node with a new set of possible positions that we denote $P'$. 

We repeat this procedure from node to node until we obtain a set of possible positions whose convex hull has fewer than $25$ points. When we reach this value, then any shooting algorithm can be used with at most $24$ misses.  
Theorem~\ref{t:final} follows from the claim that this algorithm has $O(\log \log n)$ misses. 

\subsubsection*{Convex Hull of the Possible Positions}
 
The main procedure of the width shooting algorithm uses shots $ku$ with $k\in \Z$ in direction $u$ until finding the two boundary points of the ship in this direction. 
After this sequence of shots, due to the digital convexity of $S$, the new set of possible positions $P'$ has the following property.

\begin{claim}\label{c:dio}
The convex hull of $P'$ contains at most one point on each Diophantine line parallel to $u$. 
\end{claim}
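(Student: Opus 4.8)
The plan is to fix convenient coordinates, give an explicit description of $P'$, and then show that $\conv(P')$ is so thin transverse to $u$ that no Diophantine line parallel to $u$ can contain two of its lattice points.

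First I would apply a unimodular transformation sending the primitive vector $u$ to $(0,1)$. This preserves digital convexity of $S$ and sends the Diophantine lines parallel to $u$ exactly to the integer columns $\{x=c\}$, so it suffices to treat vertical columns. The shots $ku$ now probe the column of the unknown position $p$ from below and above. Since $S$ is digital convex, it meets every column in a contiguous run; the hits at $k^-<k<k^+$ together with the two bounding misses therefore force the \emph{entire} $S$-run of $p$'s column to be exactly the $\ell_0=k^+-k^--1$ hit cells, with $p$ at the fixed offset $\delta=-k^--1$ above the bottom of that run. Hence $P'$ consists of one point $(c,\,b_c+\delta)$ per column $c$ whose $S$-run $[b_c,t_c]$ has length exactly $\ell_0$; in particular $P'$ meets each column at most once, and since $\conv(P')\subseteq\conv(S)$, every lattice point of $\conv(P')$ already lies in $S$.

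Next I would reduce the claim to a \emph{thickness} statement: a column meets $\conv(P')$ in at most one lattice point as soon as the vertical extent of $\conv(P')$ in that column is strictly less than $1$. To bound this extent I would locate $P'$ against the boundary of $\conv(S)$. Writing $\mathrm{lb}$ (convex) and $\mathrm{ub}$ (concave) for the lower and upper boundary functions of $\conv(S)$, the bottom miss $p+k^-u\notin S$ (hence $\notin\conv(S)$) places each $p\in P'$ within vertical distance less than $1$ above the curve $y=\mathrm{lb}(x)+\delta$, and the top miss places it within distance less than $1$ below $y=\mathrm{ub}(x)-\delta'$, where $\delta'=\ell_0-1-\delta$. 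Convexity of $\mathrm{lb}$ and concavity of $\mathrm{ub}$ then yield the two-sided sandwich
\[ \mathrm{lb}(x)+\delta \;\le\; y \;\le\; \mathrm{ub}(x)-\delta' \qquad\text{for all } (x,y)\in\conv(P'). \]

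The heart of the argument is to upgrade this sandwich to extent strictly below $1$. On the columns carrying points of $P'$ the run length is exactly $\ell_0$, so there the two boundaries of $\conv(S)$ run one lattice height apart; a convex lower boundary that stays one unit below a concave upper boundary across a block of columns is affine up to the rounding $b_c=\lceil \mathrm{lb}(c)\rceil$. Between two consecutive columns of $P'$ the hull is a single non-vertical edge, which already meets each column in at most one lattice point, so the only way to manufacture vertical thickness is a deviation of the sequence $b_c$ from affinity, which I would bound by strictly less than $1$. I expect this rounding control to be the main obstacle: $b_c=\lceil\mathrm{lb}(c)\rceil$ need not be exactly convex, and a genuinely curved boundary could a priori let $\conv(P')$ sag enough to swallow a second lattice point. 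The fact that rescues the bound is the trade-off intrinsic to digital convexity: a strongly curved boundary arc cannot simultaneously carry a long block of columns all of run length exactly $\ell_0$ (the run length must change as the boundary bends), so $P'$ stays inside a slab of vertical width below $1$. Once extent $<1$ is established, each Diophantine line parallel to $u$ meets $\conv(P')$ in at most one lattice point, which is the claim; this then bounds $|\conv(P')\cap\Z^2|$ by the number of such lines covering $P$, i.e.\ by $w(P)+1$, which together with Lemma~\ref{l:bla2} drives the doubly-logarithmic recursion.
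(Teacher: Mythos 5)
There is a genuine gap at exactly the step you flag as the main obstacle: upgrading your two-sided sandwich to vertical extent strictly below $1$. The sandwich you derive only gives width strictly below $2$ on the columns of $P'$ (there $\mathrm{ub}(c)-\mathrm{lb}(c)$ can be as large as $\ell_0+1$ minus an arbitrarily small rounding term, while you subtract only $\delta+\delta'=\ell_0-1$), and the mechanism you invoke to close the gap --- ``a strongly curved boundary arc cannot simultaneously carry a long block of columns all of run length exactly $\ell_0$'' --- is a heuristic, not an argument, and it targets the wrong culprit. One can write down configurations in which every column has run length exactly $\ell_0$ and yet the upper hull of $P'$ passes a full unit above an interior point of $P'$; what rules them out is not constancy of the run lengths but the digital convexity of $S$ as a \emph{global} constraint tying the upper and lower boundaries together, which your column-by-column rounding analysis never exploits. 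A secondary slip: between two consecutive columns of $P'$ the hull is not ``a single non-vertical edge'' but a two-dimensional region bounded by possibly different upper and lower edges, so those intermediate columns also need the thickness bound rather than being dismissed.

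The paper closes exactly this gap by a coordinate-free argument that you may want to compare with. From the hits $P'+(k^+-1)u\subset S$ and $P'+(k^-+1)u\subset S$ together with digital convexity of $S$, one gets $\conv(P'+(k^+-1)u)\cap\Z^2\subset S$ and likewise for $k^-$; combined with the two misses this yields $(P'+u)\cap\conv(P')=\emptyset$ and $(P'-u)\cap\conv(P')=\emptyset$. If some Diophantine line parallel to $u$ met $\conv(P')$ in two lattice points, $\conv(P')$ would contain a chord of length at least $\|u\|$ in direction $u$; a maximal such chord has an endpoint at a vertex $a\in P'$ of the hull, forcing $a+u$ or $a-u$ into $\conv(P')$, a contradiction. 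Your setup (the unimodular normalization and the explicit description of $P'$ as at most one point per column, at offset $\delta$ in the columns of run length $\ell_0$) is correct and compatible with this, but the quantitative core of your proof is missing and the route you sketch for supplying it does not work as stated.
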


\begin{proof}
Let $k^+ \in \Z ^+$ and $k^- \in \Z ^-$ be the first positive and negative integers for which the shots $ku$ give a miss (starting from $k=0$).  
The difference $k^+ - k^- -1$ is equal to the length $\ell$ of the intersection of the shape $S$ and the Diophantine line $p + k u$ for $k \in \Z$. 
After the two misses obtained with the shots $k^+ u$ and $k^- u$, the set $P'$ of the possible positions satisfies
(i)  $( P' + k^+ u ) \cap S = \emptyset$ and
(ii) $( P' + k^- u ) \cap S = \emptyset$.

As the previous shots are hits, we also have 
$P' + (k^+ - 1) u \subset S$ and $P' + (k^- + 1) u \subset S$. According to the digital convexity of $S$, it follows from the two last inclusion that the convex hulls of these two sets are still included in $S$: 
$\conv(P' + (k^+ - 1) u) \cap \Z ^2 \subset S$ and $\conv(P' + (k^- + 1) u) \cap \Z ^2 \subset S$. With (i) and (ii), we obtain 
$( P' + k^+ u ) \cap \conv(P' + (k^+ - 1) u) = \emptyset$
and 
$( P' + k^- u ) \cap \conv(P' + (k^- + 1) u) = \emptyset$. With translations, it leads to (iii) $( P' + u ) \cap \conv(P') = \emptyset$ and (iv) $( P' - u ) \cap \conv(P') = \emptyset$.

To arrive at a contradiction, we assume that there exists a pair of distinct points in the convex hull of $P'$ and on the same Diophantine line parallel to $u$. 
In~\cite{Bar01, Bar92}, they show that a segment in direction $u$, included in the convex hull of $P'$ and of maximal length has a vertex of the convex hull $a \in P'$ as an endpoint.
Hence, we have that $a$ is a vertex of the convex hull of $P'$ and as the maximal length is at least $\|u\|$, either $a+u$ or $a-u$ is in the convex hull of $P'$. It contradicts either (iii) or (iv).
\end{proof}

\subsubsection*{Complexity Analysis}

We prove Theorem~\ref{t:final} by computing the worst case number of misses of the width shooting algorithm. 
At the beginning, the set of the possible positions $P$ is initialized as $S$. Its convex hull contains no more than the $n = |S|$ points of $P$. 
After the first step, according to Claim~\ref{c:dio}, the convex hull of the new set of positions $P'$ has no more than one point per Diophantine line in the chosen direction $u$. It follows that the number of points of the convex hull of $P'$ is less or equal to the number of Diophantine lines covering $P$  which is $w(P)+1$ since we choose the direction $u$ providing this value.
We use now the bound of Lemma~\ref{l:bla2}: the inequality $n \geq w(S)^2 / 4$ leads to $w(S) + 1 \leq 2 \sqrt n + 1$. For $n$ larger than $25$, we  have $2 \sqrt n + 1 \leq n^{3/4}$. It means that except if the number of possible positions falls under $25$, the convex hull of the new set of possible positions contains fewer than $n^{3/4}$ lattice points.

By iterating $k$ times the procedure (each time uses $2$ misses), we have a set of possible positions whose convex hull contains at most 
$n^{(3/4)^k}$ points. The number of iterations $k$ to get to fewer than $25$ points is hence $O(\log \log n)$. Since each iteration has a constant number of misses, the total number of misses is also $O(\log \log n)$.

\section{Conclusion and Open Problems} \label{s:conclusion}

A simplified version of the children's game Battleship leads to numerous nontrivial questions and algorithms that we had a lot of fun to work on. We worked on the digital version of the problem, which is directly connected to the actual game. However, a continuous variation may also raise interesting questions.

Let $S \subset \R^2$ be a convex body. In the continuous version, instead of querying a point, the player can shoot along a ray until finding a miss (ray-shooting queries). The information that the player gets is the position of the boundary point of $S$ on the ray.
The problem consists of recovering the unknown position $p$ of $S$ with a small number of shots and it can be solved in the following manner. Let $v$ denote the direction of a diameter of $S$.
We shoot in directions $v$ and $-v$ from the origin and determine with two queries the length of the line segment in direction $v$ that passes through the origin. Since the direction $v$ is a diameter direction, it is not possible to hit two parallel edges in the ray-shooting queries. Hence, there are at most two points that can give the same two results from the ray shooting queries. A third and last query is sufficient to distinguish these two points.

We conclude by listing several questions that remain open.

\begin{enumerate}
    \item Given a finite shape $S \subset \Z^2$, what is the complexity to actually calculate its Battleship complexity $\comp(S)$? Is the problem NP-complete as in the case of general minimal decision trees?
    
    \item We consider the greedy shooting heuristic choosing at each node the shot providing the minimal number of misses. Does the heuristic provide an $O(\log n )$ approximation to the minimum number of misses? Does there exist better approximation algorithms?
    
    \item For the class of polyominoes that are not HV-convex, we can build examples showing that the Battleship complexity is $\Omega(\log n )$. However, there is still a big gap with the upper bound of $n-1$. Can this gap be reduced?

    \item For the class of digital convex sets, we provide an algorithm with at most $O(\log \log n )$ misses but in practice, the largest Battleship complexity that we found with the heuristic is only $3$ as for continuous shapes. Is it possible that the Battleship complexity of the digital convex sets is also bounded by a constant?
    
    \item Are there some other interesting classes of lattices sets for which efficient shooting algorithms can be found?
    
    \item Lemma~\ref{l:bla} states the inequality $n\geq \frac{3}{8} w(S)^2 - \frac{1}{2} w(S) +3$ providing a lower bound on the number of points $n$ of a digital convex set $S$ according to its lattice width $w(S)$. This discrete version of Blaschke-Lebesgue inequality is however not tight. What is the best bound that can be achieved?
    
    \item We defined the Battleship complexity in terms of the maximum number of misses. We could define an average version of the complexity, in which the average of $\m(x)$ for $x \in S$ is considered instead. What can be said about the average complexity?
    
    \item If we count the total number of shots, instead of the number of misses we can adapt the algorithm from Section~\ref{ss:loglog} to obtain an algorithm for digital convex shapes that uses $O(\log n)$ shots and that is optimal. What is the complexity of this variation for HV-convex polyominoes?
    
    \item What is the complexity of the continuous version for different classes of shapes if both rotations and translations are allowed? Is it still constant for convex shapes? 
\end{enumerate}


\bibliography{ref}

\end{document}